\journal{J. Comp. Phys.}
\DeclareMathOperator{\erfc}{erfc}
\newtheorem{theorem}{Theorem}[section]
\newtheorem{lemma}[theorem]{Lemma}
\newtheorem{remark}{Remark}[section]
\begin{document}

\begin{frontmatter}

\title{Random batch sum-of-Gaussians algorithm for molecular dynamics simulations of Yukawa systems in three dimensions}
\author[1]{Chen Chen}\ead{ronnie7@sjtu.edu.cn} 
\author[1,2]{Jiuyang Liang}\ead{liangjiuyang@sjtu.edu.cn}
\author[1,3]{Zhenli Xu}\ead{xuzl@sjtu.edu.cn}

\affiliation[1]{organization={School of Mathematical Sciences, Shanghai Jiao Tong University}, city={Shanghai 200240},	country={China}}

\affiliation[2]{organization={Center for Computational Mathematics, Flatiron Institute, Simons Foundation}, city={New York 10010}, country={USA}
}

\affiliation[3]{organization={MOE-LSC and CMA-Shanghai, Shanghai Jiao Tong University}, city={Shanghai 200240}, country={China}}

\begin{abstract}
Yukawa systems have drawn widespread interest across various applications, including plasma physics, colloidal science, and astrophysics, due to their critical role in modeling electrostatic interactions. In this paper, we introduce a novel random batch sum-of-Gaussians (RBSOG) algorithm for molecular dynamics simulations of three-dimensional Yukawa systems with periodic boundary conditions. We develop a sum-of-Gaussians (SOG) decomposition of the Yukawa kernel, dividing the interactions into near-field and far-field components. The near-field component, singular but compactly supported in a local domain, is  calculated directly. The far-field component, represented as a sum of smooth Gaussians, is treated using the random batch approximation in Fourier space with an adaptive importance sampling strategy to reduce the variance of force calculations. Unlike the traditional Ewald decomposition, which introduces discontinuities and significant truncation error at the cutoff, the SOG decomposition achieves high-order smoothness and accuracy near the cutoff, allowing for efficient and energy-stable simulations. Additionally, by avoiding the use of the fast Fourier transform, our method achieves optimal $O(N)$ complexity while maintaining high parallel scalability. Finally, unlike previous random batch approaches, the proposed adaptive importance sampling strategy achieves nearly optimal variance reduction across the regime of the coupling parameters, which is essential for handling varying coupling strengths across weak and strong regimes of  electrostatic interactions. Rigorous theoretical analyses are presented, including SOG decomposition construction, variance estimation, and simulation convergence.  We validate the performance of RBSOG method through numerical simulations of one-component plasma under weak and strong coupling conditions, using up to $10^6$ particles and 1024 CPU cores. As a practical application in fusion ignition, we simulate high-temperature, high-density deuterium-$\alpha$ mixtures to study the energy exchange between deuterium and high-energy $\alpha$ particles. Due to the flexibility of the Gaussian approximation, the RBSOG method can be readily extended to other dielectric response functions, offering a promising approach for large-scale simulations.

\end{abstract}







\begin{keyword}
Molecular dynamics simulations \sep Yukawa systems \sep sum-of-Gaussians decomposition \sep adaptive importance sampling




\end{keyword}

\end{frontmatter}
\section{Introduction}
Plasma systems have garnered significant attention due to their importance in various applications such as nuclear fusion, the stability of magnetic confinement devices, and microelectronic materials~\cite{zylstra2022burning, kritcher2022design,killian2007ultracold,graziani2012large}. The Yukawa potential \( Y(r)=e^{-r/\lambda}/r \), derived from linearly screened theory, serves as a fundamental force field in molecular dynamics (MD) simulations~\cite{Frenkel2001Understanding} which is one of popular tools for the study of plasma physics as well as chemical physics and biophysics. Because of its exponential decay form, the Yukawa potential is often overlooked and crudely treated as a short-range kernel in mainstream MD softwares  \cite{thompson2022lammps, abraham2015gromacs}. However, in practical systems like warm dense plasmas and burning plasmas used in inertial fusion~\cite{zylstra2022burning}, both temperature and number density can become extremely high. This causes the Yukawa kernel to exhibit large-$\lambda$ behavior similar to that of a long-range kernel and leads to a huge number of interaction neighbors, making direct truncation impractical. 

Numerous fast algorithms have been developed for classical Coulomb systems, corresponding to the $\lambda\rightarrow\infty$ case. Most of these algorithms can be categorized into two groups: fast Fourier transform (FFT)-accelerated Ewald summation methods~\cite{Darden1993JCP,Hockney1988Computer,essmann1995smooth,lindbo2011spectral}, and adaptive tree-based methods such as the fast multipole method (FMM)~\cite{greengard1987fast,cheng1999fast,greengard2002new,fmm1,fmm2,fmm3,fmm5} and the tree code method~\cite{Barnes1986Nature,li2009cartesian}, achieving complexity of $O(N)$ or $O(N\log N)$. Extending these methods to the case of finite $\lambda$ requires finding a suitable decomposition or expansion of the Yukawa potential. For instance, such extensions have been achieved by FFT-based methods using the generalized Ewald decomposition~\cite{dharuman2017generalized}, by FMM using plane-wave expansions~\cite{greengard2002new,huang2009fmm} or modified spherical Bessel expansions~\cite{boschitsch1999fast}, and by the tree code using appropriate Cartesian Taylor expansions~\cite{li2009cartesian}.  

Despite the remarkable achievements of these algorithms, especially in biophysical simulations~\cite{lu2007new,CHEN2018750}, they may encounter limitations in MD simulations of plasmas: When $\lambda$ is large, the long-range interactions with periodic boundary conditions necessitate the inclusion of numerous image boxes  within the interaction range in real space for the FMM, substantially increasing memory usage and computational cost. While periodic FMMs~\cite{jiang2023jcp,yan2018flexibly} mitigate this issue, it persists as the periodic tiling is divided into a smooth far part and a near part containing its nearest neighboring cells. The use of communication-intensive FFT leads to increased communication latency in parallel computing for large-scale systems which constrain both spatial and temporal scales~\cite{ayala2021scalability}. Moreover, the Ewald splitting/multipole expansion exhibits discontinuity at the near-field cutoff/boundary of leaf nodes~\cite{toxvaerd2012energy,Shamshirgar2019JCP}, resulting in a significant truncation error and noticeable energy drift, particularly at high temperatures. These combined effects pose significant challenges for achieving efficient and accurate simulations of plasma systems~\cite{graziani2012large}.

Recently, a class of stochastic algorithms has emerged, namely the random batch Ewald (RBE) method~\cite{Jin2020SISC,liang2022npt}. The RBE method employs importance sampling in the Fourier space, achieving a mathematically optimal linear complexity among Ewald-type algorithms. A recent improvement involves the use of the u-series decomposition~\cite{DEShaw2020JCP}, where the far-field kernel is represented as a series of Gaussians. This approach effectively addresses the inherent discontinuity issue and gives rise to the random batch sum-of-Gaussians (RBSOG) method~\cite{Liang2023SISC}. While these methods have demonstrated good scalability in all-atom simulations~\cite{liang2022superscalability,gao2024rbmd}, they remain limited to the pure Coulomb case. Extending the framework to Yukawa systems  poses significant challenges for two main reasons. First, the u-series decomposition used in the RBSOG is only applicable to power functions $r^{-\beta}$ with $\beta>0$~\cite{DEShaw2020JCP}, and a high-order regularity decomposition for the Yukawa kernel is still lacking. Second, even with such a decomposition, the importance sampling schemes used in the RBE and RBSOG methods are only effective at the Coulomb limit and the charge neutrality condition is satisfied. Addressing these challenges for Yukawa systems is particularly non-trivial due to the lack of scale invariance and one has to handle diverse scenarios across a broad range of screening and coupling strengths.%

In this paper, we propose a fast and adaptive random batch sum-of-Gaussians method for efficient and accurate simulations of fully periodic Yukawa systems in three dimensions, perfectly addressing aforementioned issues. We develop a novel SOG decomposition applicable to the Yukawa kernel for arbitrary $\lambda\in(0,\infty)$, achieving high-order regularity and uniform error across the entire interaction range. This decomposition is constructed by employing the truncated trapezoidal rule~\cite{Trefethen2014SIAMREV} to discretize the integral expression of the Yukawa kernel,
\begin{equation}\label{eq::ftr}
	Y(r)=\int_{-\infty}^{\infty}f(t,r)dt,\quad \text{with}\quad f(t,r):=\dfrac{1}{\sqrt{\pi}}e^{-r^2e^t-\frac{1}{4\lambda^2e^t}+t/2},
\end{equation}
such that the starting point and quadrature weights of the trapezoidal rule can be finely adjusted to ensure high-order smoothness. The feasibility of this construction scheme is guaranteed by rigorous error estimates. Subsequently, we apply the idea of random batch sampling to Fourier space calculations to achieve an optimal $O(N)$ scaling similar to previous RBE and RBSOG methods, but with a newly-proposed adaptive importance sampling strategy, inspired by the theory of Debye-H$\ddot{\text{u}}$ckel limit~\cite{hansen2013theory,hu2022symmetry}. We prove that this strategy achieves near-optimal variance reduction in the sense of ensemble averaging, due to its accurate representation of the structure factor at long-wave modes.


The resulting RBSOG method for Yukawa systems offers several advantages, demonstrated by rigorous error
analysis and systematic experiments. It achieves $O(N)$ complexity, reducing both calculation and communication cost. In comparison with the particle-particle particle-mesh (PPPM) method~\cite{Hockney1988Computer,dharuman2017generalized} and parallel volume fast multipole method (PVFMM)~\cite{malhotra2015pvfmm} for large-scale simulations with $1.28\times10^6$ particles on $1024$ cores, the RBSOG improves the performance over one order of magnitude, with parallel scalability remaining $90\%$. Since the RBSOG method is tree- and mesh-free, it also reduces memory usage by a factor of $40\%$. As a practical application in nuclear fusion ignition~\cite{zylstra2022burning}, we simulate deuterium-$\alpha$ mixtures at temperatures up to $3.48\times 10^7\,K$ and number densities up to $45.2$ particles per $\,\mathring{A}^{-3}$. The RBSOG method accurately captures energy exchange between deuterium and high-energy $\alpha$ particles while maintaining energy stability for at least $10^7$ simulation steps. This method exhibits broad applicability in MD simulations of Yukawa systems and can be seamlessly extended to other kernels used in plasma simulations through integration with the kernel-independent SOG method~\cite{greengard2018anisotropic,gao2021kernelindependent}. 
The main contributions of this paper can be highlighted as follows:
\begin{enumerate}[label=(\arabic*)]
\item We propose an SOG decomposition of the Yukawa kernel, resulting in near-field and far-field components, which are efficiently handled in real and Fourier spaces, respectively. The high-order smoothness of the decomposition significantly improves the accuracy of force calculations, thus greatly reducing energy drifts in practical simulations.
\item Our method avoids the use of communication-intensive framework FFT, instead we develop random batch method for the far-field calculation in Fourier space to achieve linear complexity and better parallel scalability. Theoretical analysis is provided to demonstrate the accuracy of the method.
\item Periodic
FMMs require dividing the periodic tiling into far-field and near-field regions containing immediate neighboring cells. The RBSOG avoids this difficulty and operates directly on the fundamental cell, resulting in better performance with simpler implementation.
\item Our method is applicable to general Yukawa systems with $\lambda\in (0,\infty)$. Compared to the sampling strategies employed in these earlier random batch methods, the adaptive importance sampling scheme in our approach achieves a significant variance reduction of $2-4$ times across a wide range of coupling parameters.
\end{enumerate}

The remaining part of this paper is structured as follows. Section~\ref{Sec::2.1} provides a brief review of basic linear response theory and the derivation of the Yukawa potential. In Section~\ref{Sec::2}, we introduce a novel SOG decomposition of the Yukawa kernel along with its error analysis. Section~\ref{sec::fast} provides a detailed description and analysis of the RBSOG algorithm. Simulation results are presented in Section~\ref{sec::numexa}. Concluding remarks are made in Section~\ref{sec::conclu}.

\section{Linear response theory and the Yukawa potential}\label{Sec::2.1}

Consider a charged system of $N$ particles located at $\{\bm{r}_i= (x_i,y_i,z_i),~i=1,\cdots,N\}$ with charge $\{q_i,~i=1,\cdots, N\}$ in a cuboid domain $\Omega$ with side lengths $L_x$, $L_y$, and $L_z$, respectively, in the presence of a background charge
density $n_{\text{b}}(\bm{r})$ (e.g., electrons) of $\bm{r}\in\Omega$. Given the charge distribution, the electrostatic potential $\Phi(\bm{r})$ satisfies the following Poisson equation:
\begin{equation}\label{eq::1.0.1}
	-\Delta \Phi(\bm{r})=4\pi\left[\sum_{i=1}^{N}q_{i}\delta(\bm{r}-\bm{r}_i)-n_{\text{b}}(\bm{r})\right].
\end{equation}
Applying the convolution theorem to Eq.~\eqref{eq::1.0.1}, one can represent $\Phi(\bm{r})$ in Fourier space as 

\begin{equation} \widehat{\Phi}(\bm{k})=-\frac{4\pi}{k^2}\left[\rho(-\bm{k})-\widehat{n}_{\text{b}}(\bm{k})\right]. 
\end{equation} 
where $\rho(\bm{k}):=\sum_{j=1}^{N}q_{j}e^{i\bm{k}\cdot\bm{r}_j}$ denotes the structure factor and $k=|\bm{k}|$. Assuming the instantaneous evolution of the background density with the point particles provides a reasonable approximation for light species, such as electrons. One can approximate the structure of the background as a linear response to the potential fluctuations arising from the presence of the point particles, allowing to express $\widehat{n}_{\text{b}}$ in terms of a charge response function $\xi(\bm{k})$ as
$
\widehat{n}_{\text{b}}(\bm{k})=\xi(\bm{k})\rho(-\bm{k}).
$
One thus writes the electrostatic potential as
\begin{equation} \widehat{\Phi}(\bm{k})=-\frac{4\pi}{k^2}\epsilon(\bm{k})^{-1}\rho(-\bm{k}), 
\end{equation} 
where
\begin{equation}
	\epsilon(\bm{k}):=\frac{1}{1+4\pi\xi(\bm{k})/{k^2}}
\end{equation}
represents the dielectric response function. When $\epsilon(\bm{k})\equiv 1$, the system exhibits no response and reduces to the pure Coulomb system. A widely used form of $\epsilon(\bm{k})$ in plasma simulations is
\begin{equation}\label{eq::epsilon}
	\epsilon(\bm{k})=1+(\lambda k)^{-2},
\end{equation}
where $\lambda$ is the screening length associated with the background. The associated potential is given by 
\begin{equation}
	\Phi(\bm{r})=\sum_{i=1}^{N} q_iY(|\bm{r}-\bm{r}_{i}|),
\end{equation}
which is the sum of Yukawa potential, also known as the screened Coulomb kernel in many applications. It should be noted that other more complex dielectric response functions may arise in plasma simulations, such as the exact gradient-corrected screening form \cite{stanton2015unified}:
\begin{equation}
	\epsilon_{\text{EGS}}(\bm{k})=1+(\lambda k)^{-2}(1+\nu \lambda^2k^2/4)^{-1},
\end{equation}
where the $k^{-4}$ term corresponds to either electronic correlations or Heisenberg uncertainty, and $\nu$ is a parameter characterizing the strength of gradient correction in the free energy. This type of potential can be written in terms of partial fractions as the difference between two Yukawa potentials, so that existing algorithms developed for the Yukawa kernel can be seamlessly extended to this potential. 

In practical plasma systems such as ultracold neutral plasmas, dusty plasmas, and those at the National Ignition Facility, the range of $\lambda$ can vary significantly~\cite{zylstra2022burning}. When the screening length $\lambda$ is large, the exponential decay is not obvious and a simple cutoff method is inefficient due to the large number of image charges within the neighbor list. Recent advancements by Dharuman \emph{et al.}~\cite{dharuman2017generalized,silvestri2022sarkas} explore the integration of Ewald splitting of Yukawa kernel with the PPPM framework similar to the Coulomb kernel, highlighting their potential to mitigate this issue for systems with moderate density and scales. However, for large-scale simulations, the spatial and parallel scalability of their method are still limited by the high communication cost in the FFT and memory usage for storing Fourier grids. Another drawback of the Ewald decomposition is the lack of smoothness around the cutoff point, which leads to significant truncation errors and various undesirable artifacts in force calculations~\cite{hammonds2020shadow}.

\section{Sum-of-Gaussians decomposition of Yukawa kernel}\label{Sec::2}
 
In this section, we first develop an SOG approximation for the Yukawa kernel with a uniform error bound on a specified domain $[\delta, R]$ with $\delta$ and $R>0$. We then demonstrate how this approximation can be used to construct a high-order regularity decomposition for the Yukawa potential and discuss the associated decomposition error. It is noted that  constructing an SOG approximation is a well-studied problem~\cite{greengard2018anisotropic,gao2021kernelindependent,beylkin2005approximation,beylkin2010approximation,trefethen2006talbot,xu2013bootstrap,AAMM-13-1126}. An optimal SOG approximation can often be obtained by using generalized Gaussian quadratures~\cite{ma1996generalized} to discretize the integral representation similar to Eq.~\eqref{eq::ftr}. For the Yukawa kernel, this approach is not straightforward because an optimal quadrature should depend on parameter $\lambda$ in the kernel. Recent efforts~\cite{greengard2018anisotropic,gao2021kernelindependent} to enable kernel-independent construction of SOG approximations also face this limitation. Therefore, we seek an SOG approximation that can be computed on the fly at negligible cost.

\subsection{Sum-of-Gaussians approximation}\label{sec::SOGapproximation}

We begin by discretizing the integral representation in Eq.~\eqref{eq::ftr} using the trapezoidal rule, which achieves spectral accuracy due to the exponential decay of the integrands at $\pm \infty$, and then truncating it at appropriate terms. 
We then establish a rigorous error estimate to ensure that only a moderate number of Gaussians is included in the approximation. Moreover, these new theoretical results play a crucial role in constructing the SOG decomposition described in Section~\ref{sec::2.3}. 


Before presenting the main theorem, we introduce some useful lemmas. 





\begin{lemma}\label{lemma::1.1}
Let $\alpha\geq 1$ and $0<\beta\leq 1$, then $g(x)=\alpha e^{-\beta x^2}$ is an upper bound function of $\erfc(x)$ for $x\geq 0$. 
\end{lemma}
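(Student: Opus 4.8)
The plan is to reduce everything to the classical one-sided bound $\erfc(x)\le e^{-x^2}$ for $x\ge 0$, and then absorb the parameters $\alpha$ and $\beta$ by elementary monotonicity. First I would record that at the origin $\erfc(0)=1\le\alpha=g(0)$, so the claimed inequality holds there and in particular it is consistent with what we are trying to prove. The real content is the decay comparison for $x>0$.

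The main step is to prove the auxiliary inequality $\erfc(x)\le e^{-x^2}$ for all $x\ge 0$. I would set $h(x):=e^{-x^2}-\erfc(x)$ and note $h(0)=0$ and $h(x)\to 0$ as $x\to\infty$ (both terms vanish). Differentiating, using $\erfc'(x)=-\tfrac{2}{\sqrt\pi}e^{-x^2}$, gives
\begin{equation*}
	h'(x)=e^{-x^2}\!\left(\frac{2}{\sqrt\pi}-2x\right),
\end{equation*}
which is positive for $0\le x<1/\sqrt\pi$ and negative for $x>1/\sqrt\pi$. Hence $h$ increases from $h(0)=0$, attains a maximum at $x=1/\sqrt\pi$, and then decreases back to $0$; therefore $h(x)\ge 0$ on $[0,\infty)$, i.e. $\erfc(x)\le e^{-x^2}$.

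The conclusion then follows by a short chain of inequalities valid for $x\ge 0$: since $0<\beta\le 1$ we have $\beta x^2\le x^2$, so $e^{-x^2}\le e^{-\beta x^2}$ by monotonicity of the exponential; and since $\alpha\ge 1$ we have $e^{-\beta x^2}\le \alpha e^{-\beta x^2}=g(x)$. Combining with the auxiliary bound,
\begin{equation*}
	\erfc(x)\le e^{-x^2}\le e^{-\beta x^2}\le \alpha e^{-\beta x^2}=g(x),
\end{equation*}
which is the desired statement.

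I do not anticipate a genuine obstacle here; the only mildly delicate point is the sign analysis of $h'$, and even that could be bypassed by simply citing $\erfc(x)\le e^{-x^2}$ as a standard estimate. The rest is monotonicity in $\alpha$ and $\beta$, and the role of the hypotheses $\alpha\ge1$, $0<\beta\le1$ is exactly to make those two comparison steps go through (and $\alpha\ge1$ alone guarantees the endpoint value at $x=0$).
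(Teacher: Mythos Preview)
Your proof is correct and follows essentially the same approach as the paper, which sketches the argument as ``taking the first-order derivative of $\mathcal{G}(x)=\alpha e^{-\beta x^2}-\erfc(x)$ and studying the monotonicity.'' Your only variation is to first reduce to the extremal case $\alpha=\beta=1$ and then recover the general case by the trivial monotonicity in $\alpha$ and $\beta$, which makes the derivative analysis slightly cleaner but is not a materially different route.
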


\begin{lemma}\label{lemma::1.3}
The Fourier transform of $f(t,r)$ given in Eq.~\eqref{eq::ftr} with respect to variable $t$ is given by
\begin{equation}
\widehat{f}(k,r)=\frac{2}{\sqrt{\pi}}(2\lambda r)^{-\frac{1}{2}+2\pi i k}K_{\frac{1}{2}-2\pi i k}\left(\frac{r}{\lambda}\right),
\end{equation}
where 
\begin{equation}\label{eq::Knux}
K_{\nu}(x)=\frac{x^{\nu}}{2^{\nu+1}}\int_{0}^{\infty}e^{-t-\frac{x^2}{4t}}t^{-(\nu+1)}dt
\end{equation}
denotes the modified Bessel function of the second kind with order $\nu$ \emph{\cite{abramowitz1964handbook}}.

\end{lemma}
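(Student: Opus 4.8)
The plan is to compute the Fourier transform directly from the definition. Writing $\widehat{f}(k,r) = \int_{-\infty}^{\infty} f(t,r) e^{-2\pi i k t}\, dt$ and substituting the explicit form of $f$ from Eq.~\eqref{eq::ftr}, I would have
\begin{equation*}
\widehat{f}(k,r) = \frac{1}{\sqrt{\pi}} \int_{-\infty}^{\infty} e^{-r^2 e^t - \frac{1}{4\lambda^2} e^{-t}} \, e^{(\frac{1}{2} - 2\pi i k) t} \, dt.
\end{equation*}
The natural move is the substitution $u = e^t$, so $du = e^t\, dt$, i.e.\ $dt = du/u$, turning the integral into $\frac{1}{\sqrt{\pi}} \int_0^{\infty} e^{-r^2 u - \frac{1}{4\lambda^2 u}} u^{\frac{1}{2} - 2\pi i k} \frac{du}{u} = \frac{1}{\sqrt{\pi}} \int_0^{\infty} e^{-r^2 u - \frac{1}{4\lambda^2 u}} u^{-\frac{1}{2} - 2\pi i k}\, du$.

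Next I would massage this into the integral representation of $K_\nu$ given in Eq.~\eqref{eq::Knux}. That formula, after the change of variables $t \mapsto $ (the integration variable), expresses $\int_0^\infty e^{-at - b/t} t^{-(\nu+1)}\, dt$ in terms of $K_\nu$; more precisely, by rescaling $t = \sqrt{b/a}\, s$ one gets the symmetric form $\int_0^\infty e^{-at-b/t} t^{c-1}\, dt = 2 (b/a)^{c/2} K_c(2\sqrt{ab})$. Applying this with $a = r^2$, $b = 1/(4\lambda^2)$, and $c - 1 = -\frac{1}{2} - 2\pi i k$ (so $c = \frac{1}{2} - 2\pi i k$), one computes $2\sqrt{ab} = 2 \sqrt{r^2/(4\lambda^2)} = r/\lambda$ and $(b/a)^{c/2} = (1/(4\lambda^2 r^2))^{c/2} = (2\lambda r)^{-c}$. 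Collecting factors, $\widehat{f}(k,r) = \frac{1}{\sqrt{\pi}} \cdot 2 (2\lambda r)^{-(\frac{1}{2} - 2\pi i k)} K_{\frac{1}{2} - 2\pi i k}(r/\lambda)$, which is exactly the claimed expression after noting $(2\lambda r)^{-(\frac12 - 2\pi i k)} = (2\lambda r)^{-\frac12 + 2\pi i k}$.

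The main obstacle, such as it is, is bookkeeping: correctly tracking the complex exponent through the substitutions and making sure the branch of the power function $(2\lambda r)^{-\frac12 + 2\pi i k}$ is the principal one (harmless here since $2\lambda r > 0$). A secondary technical point is justifying convergence and the interchange of limits: the integrand decays doubly exponentially as $t \to +\infty$ (from the $e^{-r^2 e^t}$ factor) and as $t \to -\infty$ (from the $e^{-e^{-t}/(4\lambda^2)}$ factor), so absolute integrability holds uniformly and no regularization is needed; the identity $\int_0^\infty e^{-at - b/t} t^{c-1}\, dt = 2(b/a)^{c/2} K_c(2\sqrt{ab})$ is classical and valid for $\Re a, \Re b > 0$ and all complex $c$, which covers our case. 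I would state these convergence remarks briefly and then present the two substitutions and the final identification as the body of the proof.
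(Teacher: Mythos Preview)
Your proposal is correct and follows essentially the same route as the paper's (sketch) proof: apply the definition of the Fourier transform, make an exponential change of variable, and identify the resulting integral with the representation~\eqref{eq::Knux} of $K_\nu$. The only cosmetic difference is that the paper substitutes $u=r^2 e^t$ whereas you take $u=e^t$; these differ by a constant scaling and lead to the same identification.
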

\begin{lemma}\label{lemma::1.4}
For real $k$, finite and positive $\lambda$, and $r\in[\delta,R]$, the following inequality holds:
	\begin{equation}
		\left|K_{\frac{1}{2}-2\pi ik}\left(\frac{r}{\lambda}\right)\right|\leq \frac{\pi\lambda^{1/2}}{r^{1/2}}e^{-\pi^2k+\frac{1}{6|1-2\pi i k|}}.
	\end{equation}
\end{lemma}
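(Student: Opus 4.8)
The engine of the estimate is the connection formula $K_\nu(z)=\dfrac{\pi}{2\sin(\pi\nu)}\bigl(I_{-\nu}(z)-I_{\nu}(z)\bigr)$, valid for $\nu\notin\mathbb{Z}$, applied with $\nu=\tfrac12-2\pi i k$ and $z=r/\lambda$. The decisive point is that $\sin\!\bigl(\pi(\tfrac12-2\pi i k)\bigr)=\cos(2\pi^2 i k)=\cosh(2\pi^2 k)$ is real, and together with $\cosh(2\pi^2 k)\ge\tfrac12 e^{2\pi^2|k|}$ this is precisely what produces the $e^{-\pi^2 k}$ factor. (A naive rotation of the contour in the integral representation \eqref{eq::Knux}, by contrast, only gives the bound with a spurious extra $\sqrt{|k|}$, so one must go through the series.) Since $\overline{K_\nu(z)}=K_{\bar\nu}(z)$ for real $z$, the left-hand side is even in $k$ while the right-hand side is exponentially large for $k<0$; hence it suffices to treat $k\ge 0$.

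It then remains to bound $\bigl|I_{\pm(1/2-2\pi i k)}(r/\lambda)\bigr|$ by $\sqrt{\cosh(2\pi^2 k)}$ times a mild factor. I would work term by term in $I_\mu(z)=\sum_{m\ge0}\frac{(z/2)^{2m+\mu}}{m!\,\Gamma(m+\mu+1)}$, using $|(z/2)^{2m+\mu}|=(z/2)^{2m+\Re\mu}$ together with the exact identity $|\Gamma(\tfrac12+iy)|^2=\pi/\cosh(\pi y)$ and the recurrence $\Gamma(w+1)=w\Gamma(w)$, which give $|\Gamma(m+\tfrac12+iy)|^2=\frac{\pi}{\cosh(\pi y)}\prod_{j=0}^{m-1}\bigl((j+\tfrac12)^2+y^2\bigr)$ and $|\Gamma(m+\tfrac32+iy)|^2=\frac{\pi}{\cosh(\pi y)}\prod_{j=0}^{m}\bigl((j+\tfrac12)^2+y^2\bigr)$ with $y=2\pi k$. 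Discarding the $4\pi^2k^2$ contributions (keeping only $\prod(j+\tfrac12)^2$) telescopes the two series into $\cosh(r/\lambda)$ and $\sinh(r/\lambda)$ up to a common factor $\sqrt{2\lambda/r}$, so $\bigl|I_{-\nu}(r/\lambda)-I_\nu(r/\lambda)\bigr|\le\sqrt{\tfrac{2\lambda}{r}}\,\sqrt{\tfrac{\cosh(2\pi^2 k)}{\pi}}\;e^{r/\lambda}$; feeding this into the connection formula and using $\cosh(2\pi^2 k)\ge\tfrac12 e^{2\pi^2 k}$ yields the clean estimate $\bigl|K_{1/2-2\pi i k}(r/\lambda)\bigr|\le\sqrt{\tfrac{\pi\lambda}{r}}\,e^{r/\lambda}\,e^{-\pi^2 k}$. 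Retaining the $4\pi^2k^2$ terms only sharpens the $e^{r/\lambda}$ into the advertised correction $e^{1/(6|1-2\pi i k|)}$ on the relevant range of $r/\lambda$, and the constant $\pi$ (against the $\sqrt\pi$ implicit in the clean bound) supplies the remaining slack.

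For the part of the $(k,r)$ range where this series estimate is not sharp enough — essentially when $r/\lambda$ is not small compared with $\pi^2 k$, so that the $e^{-r/\lambda}$ decay of $K_\nu$ which we threw away actually matters — I would instead invoke the elementary bound $\bigl|K_{1/2-2\pi i k}(r/\lambda)\bigr|\le K_{1/2}(r/\lambda)=\sqrt{\tfrac{\pi\lambda}{2r}}\,e^{-r/\lambda}$, which follows from $K_\nu(z)=\int_0^\infty e^{-z\cosh t}\cosh(\nu t)\,dt$ and the pointwise inequality $\bigl|\cosh\!\bigl((\tfrac12-2\pi i k)t\bigr)\bigr|\le\cosh(\tfrac t2)$; this easily beats the stated right-hand side once $r/\lambda\gtrsim\pi^2 k$. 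Combining the two regimes finishes the proof.

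The main obstacle is the constant bookkeeping in the middle step: one has to verify that the mild factor emerging from summing the two Bessel series is genuinely dominated by $e^{1/(6|1-2\pi i k|)}$ uniformly for $k\ge0$ and $r\in[\delta,R]$ (it is here that keeping the $4\pi^2k^2$ terms is essential, since otherwise the correction degrades to $e^{r/\lambda}$), and that this refined series bound and the crude $K_{1/2}$ bound overlap on their respective ranges of validity — this is what forces the precise constants $\pi$ and $\tfrac16$ in the statement.
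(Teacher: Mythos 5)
Your route is genuinely different from the paper's, and it contains a real gap in the intermediate range of $r/\lambda$. Your argument covers two regimes: (i) the series/connection-formula estimate, which after retaining the $4\pi^2k^2$ terms in $\prod_j\bigl((j+\tfrac12)^2+4\pi^2k^2\bigr)$ gives a correction factor of order $e^{z^2/(8\pi k)}$ with $z=r/\lambda$ (each retained factor only buys you $\sqrt{(j+\tfrac12)^2+y^2}\ge |y|$, so the series telescopes to $\exp(z^2/(4|y|))$, not to $e^{1/(6|1-2\pi ik|)}$); and (ii) the crude bound $|K_{1/2-2\pi ik}(z)|\le K_{1/2}(z)=\sqrt{\pi/(2z)}\,e^{-z}$, which beats the target only once $z\gtrsim \pi^2k-\tfrac12\log(2\pi)$. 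Regime (i) therefore requires roughly $z\lesssim\sqrt{4\pi k\log\pi}$ and regime (ii) requires $z\gtrsim\pi^2k$; these fail to overlap as soon as $k\gtrsim 1/3$, and the uncovered window grows with $k$. Concretely, at $k=1$, $z=6$ (well inside the range $[\delta/\lambda,R/\lambda]$ with $R=33\lambda$): the target right-hand side is $\pi z^{-1/2}e^{-\pi^2}e^{1/(6|1-2\pi i|)}\approx 6.8\times10^{-5}$ and the true value is $\approx 5\times10^{-5}$ (so the inequality is tight, with under $30\%$ margin), but your series bound evaluates to $\approx 3\times10^{-4}$ and your $K_{1/2}$ bound to $\approx 1.3\times10^{-3}$. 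So the "constant bookkeeping" you flag as the main obstacle is not bookkeeping: neither estimate closes the middle regime, and some genuinely sharper tool (a uniform asymptotic/saddle-point analysis, or an oscillatory-integral cancellation) is needed there.

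For contrast, the paper's proof avoids any regime-splitting. It uses Basset's integral representation
\begin{equation*}
K_{\nu}(xz)=\frac{\Gamma(\nu+\tfrac{1}{2})(2z)^{\nu}}{\sqrt{\pi}\,x^{\nu}}\int_{0}^{\infty}\frac{\cos(xt)}{(t^2+z^2)^{\nu+\frac{1}{2}}}\,dt
\end{equation*}
with $\nu=\tfrac12-2\pi ik$, bounds the oscillatory integral brutally by $\int_0^\infty(1+t^2)^{-1}dt=\pi/2$ uniformly in $r$, and pushes the entire $k$-dependence into a single inequality for $|\Gamma(1-2\pi ik)|$; the factor $e^{-\pi^2k}$ then comes from the decay of the Gamma function on vertical lines rather than from $\sin(\pi\nu)=\cosh(2\pi^2k)$. (You should be aware that the paper's quoted Gamma inequality omits the Stirling factor $|1-2\pi ik|^{1/2}$, i.e.\ exactly the ``spurious $\sqrt{|k|}$'' you identify, so the paper's own constants also deserve scrutiny; but the structural point stands that their reduction makes the $r$-dependence trivial, which is precisely what your two-regime argument fails to achieve.)
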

The proof of Lemma \ref{lemma::1.1} can be simply done by taking the first-order derivative of $\mathcal{G}(x)=\alpha e^{-\beta x^2}-\erfc(x)$ and studying the monotonicity. 
The proof of Lemma \ref{lemma::1.3} proceeds by sequentially following these steps: applying the definition of Fourier transform to $f(t,r)$, performing a change of variable $u=r^2e^t$, and using the representation given in Eq.~\eqref{eq::Knux}. The proof of Lemma \ref{lemma::1.4} is provided in~\ref{app::A.1}. These lemmas lead to the proof of the main result, Theorem ~\ref{thm::1.1}. 


\begin{theorem}\label{thm::1.1}
Let $\varepsilon$ be a prescribed tolerance, $[\delta, R]$ be the approximation range, and $\lambda>0$. Let $t_m:=t_0+mh$ with $t_0$ being referred to as the starting point of trapezoidal rule throughout this paper. For any $\delta\in(0,R)$, there exist a step size $h$, a real number $t_0\in[0,h)$ and two integers $M_1$ and  $M_2$
such that
\begin{equation}\label{eq::1.2}
\left|\frac{e^{-r/\lambda}}{r}-h\sum_{m=-M_1}^{M_2}f\left(t_m, r\right)\right|\leq \varepsilon,\quad r\in[\delta, R],
\end{equation}
where $f$ is defined via Eq.~\eqref{eq::ftr}.
\end{theorem}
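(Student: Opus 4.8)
\emph{Proof proposal.} The plan is to split the total error into an \emph{aliasing} (infinite-sum discretization) error and two \emph{truncation} (tail) errors, and to bound each separately, fixing $h$ first and then the cutoffs $M_1,M_2$. I would first record the aliasing identity coming from the Poisson summation formula. For each fixed $r\in[\delta,R]$ the map $t\mapsto f(t,r)$ is smooth and decays double-exponentially as $t\to+\infty$ and (at least) exponentially as $t\to-\infty$, so Poisson summation applies and, since $\widehat{f}(0,r)=\int_{-\infty}^{\infty}f(t,r)\,dt=Y(r)$ by Eq.~\eqref{eq::ftr}, yields for any $t_0$
\begin{equation*}
h\sum_{m=-\infty}^{\infty}f(t_0+mh,r)-\frac{e^{-r/\lambda}}{r}=\sum_{k\neq 0}\widehat{f}\!\left(\frac{k}{h},r\right)e^{2\pi i k t_0/h}.
\end{equation*}
The right-hand side has modulus independent of $t_0$, so the freedom in $t_0\in[0,h)$ is immaterial for this estimate (it is reserved for the high-order regularity construction of Section~\ref{sec::2.3}); concretely one may take $t_0=0$.

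Next I would bound the aliasing sum using Lemmas~\ref{lemma::1.3} and~\ref{lemma::1.4}. Since $f(\cdot,r)$ is real, $|\widehat{f}(\cdot,r)|$ is even, so it suffices to treat $k\geq 0$; substituting the explicit form of $\widehat{f}$ from Lemma~\ref{lemma::1.3} and the bound of Lemma~\ref{lemma::1.4}, and using $|1-2\pi ik|\geq 1$ together with $r\geq\delta$, a short computation produces an \emph{absolute} constant $C$ (one finds $C=\sqrt{2\pi}\,e^{1/6}$) such that $|\widehat{f}(\xi,r)|\leq (C/\delta)\,e^{-\pi^2|\xi|}$ for all real $\xi$ and all $\lambda>0$. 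Hence
\begin{equation*}
\left|h\sum_{m=-\infty}^{\infty}f(t_0+mh,r)-\frac{e^{-r/\lambda}}{r}\right|\leq\frac{C}{\delta}\sum_{k\neq 0}e^{-\pi^2|k|/h}=\frac{2C}{\delta}\cdot\frac{e^{-\pi^2/h}}{1-e^{-\pi^2/h}},
\end{equation*}
uniformly in $r\in[\delta,R]$ and $\lambda>0$, and choosing $h$ small enough (depending only on $\delta$ and $\varepsilon$) makes this at most $\varepsilon/2$.

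Finally I would control the two tails. For the right tail I would use $f(t,r)\leq\pi^{-1/2}e^{-\delta^2 e^{t}+t/2}$ (valid since $r\geq\delta$ and $-1/(4\lambda^2 e^{t})\leq 0$); once $t_{M_2}>-\log(2\delta^2)$ this bound is monotone decreasing in $t$, so $h\sum_{m>M_2}f(t_m,r)\leq\pi^{-1/2}\int_{t_{M_2}}^{\infty}e^{-\delta^2 e^{t}+t/2}\,dt$, and the substitution $u=e^{t}$ turns the integral into the incomplete-gamma-type quantity $\pi^{-1/2}\int_{e^{t_{M_2}}}^{\infty}e^{-\delta^2 u}u^{-1/2}\,du$, which tends to $0$ (faster than $e^{-\delta^2 e^{t_{M_2}}}$) as $M_2\to\infty$. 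For the left tail I would use the crude, $r$- and $\lambda$-uniform bound $f(t,r)\leq\pi^{-1/2}e^{t/2}$, so that $h\sum_{m<-M_1}f(t_m,r)$ is dominated by a geometric series of the form $C(h)\,e^{t_{-M_1}/2}\to 0$ as $M_1\to\infty$ (in practice one keeps the $e^{-1/(4\lambda^2 e^{t})}$ factor, which lets $M_1$ be chosen much smaller when $\lambda$ is moderate). Picking $M_1,M_2$ large enough, each tail is at most $\varepsilon/4$; since $f\geq 0$, the triangle inequality combining the aliasing bound with the two tail bounds gives the claimed estimate $\varepsilon$.

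I expect the crux to be the aliasing step: showing that $|\widehat{f}(k/h,r)|$ decays geometrically in $|k|$ at a rate ($\pi^2/h$) independent of both $\lambda$ and $r$ once the harmless $1/\delta$ prefactor is extracted. This is precisely where Lemma~\ref{lemma::1.4}'s uniform bound on the modified Bessel function $K_{1/2-2\pi ik}(r/\lambda)$ of complex order does the real work; without it one could not guarantee a uniform, moderate $h$ — and hence a moderate number $M_1+M_2+1$ of Gaussians — across the whole range $\lambda\in(0,\infty)$. The remaining ingredients (the monotonicity needed to pass from tail sums to tail integrals, and the consistent ordering ``choose $h$, then $M_1,M_2$'') are routine.
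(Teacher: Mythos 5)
Your proposal is correct and follows essentially the same route as the paper's proof: Poisson summation, a split into an aliasing error plus two tail (truncation) errors, the aliasing bound via Lemmas~\ref{lemma::1.3} and~\ref{lemma::1.4} yielding the $e^{-\pi^2/h}$ decay uniform in $\lambda$ and $r\in[\delta,R]$, tail bounds by monotone comparison with integrals, and the same ordering of choosing $h$ first and then $M_1,M_2$. The only deviations are cosmetic — you budget the error as $\varepsilon/2+\varepsilon/4+\varepsilon/4$ rather than three thirds, and your left-tail bound discards the $e^{-1/(4\lambda^2 e^{t})}$ factor that the paper retains to sharpen the practical criterion for $M_1$ — neither of which affects the existence statement being proved.
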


\begin{proof}[Proof]
By the Poisson summation formula, one has
\begin{equation}\label{eq::2.1}
\sum_{m \in \mathbb{Z}} \widehat{f}\left(\frac{m}{h},r\right) e^{2 \pi i t_0 \frac{m}{h}}=h \sum_{m \in \mathbb{Z}} f\left(t_m,r\right).
\end{equation}
Truncating the right-hand side of Eq.~\eqref{eq::2.1} at $-M_1\leq m\leq M_2$ and rearranging terms, one obtains the SOG approximation of the Yukawa kernel:
\begin{equation}
\dfrac{e^{-r/\lambda}}{r}\approx h\sum_{m=-M_1}^{M_2}f\left(t_m,r \right).
\end{equation}
The error can be estimated by 
     \begin{equation}\label{eq::1.7}
     	\left|\dfrac{e^{-r/\lambda}}{r}-h\sum_{m=-M_1}^{M_2}f\left(t_m,r\right)\right|\leq E_{T}+E_{A},
     \end{equation}
 	 where the truncation error $E_T$ and the ``aliasing'' error $E_A$ are given by
 	 \begin{equation}
 	 	E_T=h\left|\sum_{m=-\infty}^{-M_1-1}f\left(t_m,r \right)+\sum_{m=M_2+1}^{\infty}f\left(t_m,r\right)\right|,
 	 \end{equation}
    and
    \begin{equation}\label{eq::1.9.}
    E_A=\left|\dfrac{e^{-r/\lambda}}{r}-h\sum_{m\in\mathbb{Z}}f\left(t_m,r\right)\right|=\left|\sum_{\substack{m \in \mathbb{Z}\\m\neq 0}} \widehat{f}\left(\frac{m}{h},r\right) e^{2 \pi i t_0 \frac{m}{h}}\right|,
    \end{equation}
    respectively. The truncation error $E_T$ satisfies
\begin{equation}
    \begin{split}
       E_T\leq \frac{h}{\sqrt{\pi}}\left(\left|\sum_{m=-\infty}^{-M_1-1}e^{\frac{t_m}{2}-\frac{1}{4\lambda^2e^{t_{m}}}-\delta^2e^{t_{m}}}\right|+\frac{1}{\sqrt{\pi}}\left|\sum_{m=M_2+1}^{\infty}e^{\frac{t_m}{2}-e^{t_m}r^2}\right|\right):=E_{T,\text{1}}+E_{T,\text{2}}
    \end{split}
\end{equation}
by the definition of $f(t,r)$, where $E_{T,\text{1}}$ and $E_{T,\text{2}}$ are the lower and upper tails of $E_{T}$, respectively. By leveraging the monotonic decrease of the function $e^{\frac{t}{2}-\frac{1}{4\lambda^2e^{t}}-\delta^2e^{t}}$ over the interval $t\in(-\infty,t_{-M_1-1}]$ and estimating $e^{-t}/(4\lambda^2)$ by its upper bound, the lower tail $E_{T,\text{1}}$ is bounded by
\begin{equation}\label{eq::1.9}
	\begin{split}
	E_{T,\text{1}}\leq \frac{1}{\sqrt{\pi}}\int_{-\infty}^{t_{-M_1}}e^{\frac{y}{2}-\frac{1}{4\lambda^2}e^{-y}-\delta^2e^y}dy\leq \frac{e^{-\frac{1}{4\lambda^2}e^{-t_{-M_1}}}}{\sqrt{\pi}}\int_{-\infty}^{t_{-M_1}}e^{\frac{y}{2}-\delta^2e^y}dy=\frac{e^{-\frac{1}{4\lambda^2}e^{-t_{-M_1}}}}{\delta}\left[1-\frac{\Gamma\left(\frac{1}{2},\delta^2e^{t_{-M_1}}\right)}{\Gamma\left(\frac{1}{2}\right)}\right]
	\end{split}
\end{equation}
where 
\begin{equation}\label{eq::21}
	\Gamma(\beta,x)=\int_{x}^{\infty}e^{-s}s^{\beta-1}ds
\end{equation}
represents the incomplete Gamma function. Note that the result in the right-hand side of Eq.~\eqref{eq::1.9} works for all $r\in[\delta,R]$. By Eqs.~\eqref{eq::1.9}-\eqref{eq::21}, it follows that 
\begin{equation}\label{eq::1.11}
	\begin{split}
	E_{T,\text{1}}\leq \frac{e^{-\frac{1}{4\lambda^2}e^{-t_{-M_1}}}}{\sqrt{\pi}\delta}\int_{0}^{\delta^2e^{t_{-M_1}}}y^{-1/2}e^{-y}dy\leq \frac{e^{-\frac{1}{4\lambda^2}e^{-t_{-M_1}}}}{\sqrt{\pi}\delta}\int_{0}^{\delta^2e^{t_{-M_1}}}y^{-1/2}dy= \frac{2e^{\frac{t_{-M_1}}{2}-\frac{1}{4\lambda^2}e^{-t_{-M_1}}}}{\sqrt{\pi}}
	\end{split}
\end{equation}
where the second inequality is obtained by bounding $e^{-y}$ by $1$. Similarly, one can estimate the upper tail $E_{T,2}$ by  
\begin{equation}\label{eq::1.12}
	E_{T,2}\leq \frac{1}{\sqrt{\pi}}\int_{t_{M_2}}^{\infty}e^{y/2-r^2e^y}dy=\frac{1}{r}\erfc(re^{\frac{t_{M_2}}{2}}) 
\end{equation}
due to the fact that the function $e^{t/2-e^{t}r^2}$ is monotonally decreasing on $t\in [t_{M_2+1},\infty)$ for any $M_2$ satisfing 
\begin{equation}\label{eq::t_m2}
t_{M_2+1}\geq -2\log(\sqrt{2}R)
\end{equation}
and $r\in[\delta,R]$. Applying Lemma~\ref{lemma::1.1} with $\alpha=\beta=1$, one has
\begin{equation}\label{eq::ET2}
	E_{T,2}\leq\frac{1}{r}e^{-r^2e^{t_{M_2}}}\leq \frac{1}{\delta}e^{-\delta^2e^{t_{M_2}}}.
\end{equation}
For the aliasing error $E_A$, applying Lemma \ref{lemma::1.3} gives  
\begin{equation}\label{eq::2.29}
	\begin{split}
		E_A\leq \left|\sum_{\substack{m \in \mathbb{Z}\\m\neq 0}}\frac{2}{\sqrt{\pi}}(2\lambda r)^{-\frac{1}{2}+2\pi i \frac{m}{h}}K_{\frac{1}{2}-2\pi i \frac{m}{h}}\left(\frac{r}{\lambda}\right)\right|.
	\end{split}
\end{equation}
By applying Lemma \ref{lemma::1.4} to Eq.~\eqref{eq::2.29}, one has
\begin{equation}\label{eq::EaA}
	E_A\leq\sum_{m=1}^{\infty}\frac{2\sqrt{2\pi}e^{1/6}}{r}e^{-\frac{\pi^2m}{h}}=\frac{2\sqrt{2\pi}}{r}\frac{e^{1/6-\frac{\pi^2}{h}}}{1-e^{-\frac{\pi^2}{h}}}.
\end{equation}

Since both Eqs.~\eqref{eq::1.11} and ~\eqref{eq::1.12} exhibit monotonically decreasing behavior with $M_1$ and $M_2$, and approach zero at the limit $M_1, M_2\rightarrow \infty$, one can achieve any desired accuracy $\varepsilon$ by the following procedure. First, one takes
\begin{equation}\label{eq::2.30}
	h\leq h^*=\frac{\pi^2}{\log(\varepsilon+6\sqrt{2\pi}e^{1/6}\delta^{-1})+\log\varepsilon^{-1}}
\end{equation}
so that $E_A\leq \varepsilon/3$ for all $r\in[\delta,R]$ according to Eq.~\eqref{eq::EaA}. Next, we solve for $M_{_{\scriptstyle *}}$, $M{\scriptstyle *}$ and $M^*$ as the solutions of equations
\begin{equation}\label{eq::2.31}
	\frac{2e^{\frac{t_{-M_{*}}}{2}-\frac{1}{4\lambda^2}e^{-t_{-M_{*}}}}}{\sqrt{\pi}}=\varepsilon/3,\quad \frac{1}{\delta}e^{-\delta^2e^{t_{M\scriptstyle *}}}=\varepsilon/3,\quad\text{and}\quad M^*= -1-\dfrac{1}{h}\left[2\log(\sqrt{2}R)+t_0\right],
\end{equation}
and set $M_1\geq M_*$ and $M_2\geq \max\{M*,~M^*\}$, ensuring $E_{T,1},~E_{T,2}\leq \varepsilon/3$ (according to Eqs.~\eqref{eq::1.11} and \eqref{eq::ET2}) and satisfying Eq.~\eqref{eq::t_m2}. Since $M_1$ and $M_2$ should be integers in the SOG series, we set $M_*$, $M*$ and $M^*$ as the nearest integers larger than their respective solutions from Eq.~\eqref{eq::2.31}. Finally, by applying the triangle inequality, Eq.~\eqref{eq::1.2} holds and the proof is completed.
\end{proof}

The total number of terms in the SOG approximation given by Theorem \ref{thm::1.1} is equal to $M_{\text{tot}}=M_1+M_2+1$. If we fix $\varepsilon$, $h$ and $t_0$, the solutions $M_{_{\scriptstyle *}}$, $M{\scriptstyle *}$ and $M^*$ of Eq.~\eqref{eq::2.31} depend on $\lambda$, $R$, and $\delta$ and can be determined explicitly. The corresponding criteria for $M_1$ and $M_2$ are given as
\begin{equation}\label{eq::2.33}
M_1\geq \frac{1}{h}\left[t_0-\log\left(\frac{\pi\varepsilon^2}{36}\right)-W\left(\frac{18}{\pi\lambda^2\epsilon^2}\right)\right]
\end{equation}
and
\begin{equation}\label{eq::2.34}
M_2\geq \frac{1}{h}\max\left\{-t_0-2\log\delta+\log\log(3\epsilon^{-1}\delta^{-1}), -t_0-h-2\log(\sqrt{2}R)\right\},
\end{equation}
where $W(\cdot)$ represents the Lambert $W$ function~\cite{abramowitz1964handbook}, defined implicitly as $W(x)e^{W(x)}=x$. The Lambert $W$ can be efficiently calculated using existing libraries for special functions. These criteria are near-optimal and straightforward to apply in practical calculations.

 

\subsection{SOG decomposition with high-order smoothness}\label{sec::2.3}
In this subsection, we demonstrate how to use the SOG approximation to construct a new decomposition with high-order smoothness. Unlike the work on u-series~\cite{DEShaw2020JCP,Liang2023SISC}, which is applicable only to power functions and solving for the cutoff $r_c$ to satisfy the continuity condition, we solve the continuity equation for the starting point of the trapezoidal rule for higher regularity, resulting in a highly efficient SOG decomposition for the general Yukawa kernel.

Given the screening length $\lambda$, near-field cutoff $r_c$, and tolerance $\varepsilon$, substituting $\delta = r_c$ and $R = 33\lambda$ (where the value of Yukawa kernel beyond $R$ is less than $10^{-16}$) into Theorem~\ref{thm::1.1} and using Eqs.~\eqref{eq::2.33} and \eqref{eq::2.34} yields an SOG approximation on $r\geq r_c$:
\begin{equation}\label{eq::far}
\mathcal{F}_{h}^{t_0}(r)=h\sum_{m=-M_1}^{M_2}f\left(t_m,r\right),
\end{equation}
with the error uniformly controlled by $\varepsilon$ for $r\geq r_c$. We decompose the Yukawa kernel $Y(r)$ as the sum of near- and far-field parts, 
\begin{equation}\label{eq::2.36}
	Y(r)\rightarrow\mathcal{N}_{h}^{t_0}(r)+\mathcal{F}_{h}^{t_0}(r),
\end{equation} 
where the far part is the SOG approximation itself, and the near part is given by
\begin{equation}\label{eq::short}
	\mathcal{N}_{h}^{t_0}(r)=\begin{cases}
		Y(r)-\mathcal{F}_{h}^{t_0}(r),\quad &\text{if }r< r_c,\\[2.2em]0,&\text{if }r\geq r_c.
	\end{cases}
\end{equation}
To ensure the continuity at $r_c$, and consequently, throughout the entire real axis, one chooses the starting point $t_0$ as the smallest positive root of
\begin{equation}\label{eq::2.38}
 Y(r_c)-\mathcal{F}_{h}^{t_0}(r_c)=0.
\end{equation}
It can be seen that this SOG decomposition reproduces $Y(r)$ exactly for $r<r_c$, and the truncation error for $r\geq r_c$ is expected to be small as the pointwise error of far-field SOG approximation can be uniformly bounded. 
It should be noted that such a construction relies on the existence of roots of Eq.~\eqref{eq::2.38}, and is exactly given by the following theorem.
\begin{theorem}
An SOG approximation can be obtained as described in Section~\ref{sec::SOGapproximation}, such that there are precisely two roots on $t_0\in[0,h)$ that strictly satisfies Eq.~\eqref{eq::2.38}. Precisely, these two roots are
	\begin{equation}\label{eq::roots}
		\alpha_1=Ch+\frac{h}{4}-\log(2\lambda r_c),\quad\alpha_2=Ch+\frac{3h}{4}-\log(2\lambda r_c),
	\end{equation}
    where $C$ is a constant enforcing that $\alpha_1$,$\alpha_2$ $\in[0, h)$.
\end{theorem}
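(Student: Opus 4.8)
The plan is to analyze the function $\mathcal{F}_h^{t_0}(r_c)$ as $t_0$ ranges over $[0,h)$ and show that the equation $Y(r_c) = \mathcal{F}_h^{t_0}(r_c)$ has precisely two solutions, which can be located explicitly. First I would observe that, since $r_c = \delta$ is the left endpoint of the approximation range, Theorem~\ref{thm::1.1} guarantees that the \emph{full} bi-infinite trapezoidal sum $h\sum_{m\in\mathbb{Z}} f(t_m, r_c)$ approximates $Y(r_c)$ with aliasing error $E_A \le \varepsilon/3$; moreover, the truncated tails that we drop to pass from the full sum to $\mathcal{F}_h^{t_0}$ are each bounded by $\varepsilon/3$. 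Hence $Y(r_c) - \mathcal{F}_h^{t_0}(r_c)$ equals the (small, $O(\varepsilon)$) sum of the aliasing term plus the two dropped tails, and the whole question reduces to understanding the $t_0$-dependence of the dominant term in this residual.

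The key step is to identify which single Gaussian in the neighborhood of $m=-M_1$ or $m=M_2$ dominates the residual. Looking at the lower-tail estimate \eqref{eq::1.11} and the structure of $f(t,r_c) = \pi^{-1/2}\exp(-r_c^2 e^t - \tfrac{1}{4\lambda^2 e^t} + t/2)$, the term we drop at the lower end behaves, for $r_c$ not too small, like a Gaussian in the variable $e^{-t}$ centered where $\tfrac{1}{4\lambda^2 e^t}$ is of order one, i.e.\ near $t \approx -\log(2\lambda r_c)$ up to the balancing with $r_c^2 e^t$. I would Taylor-expand the exponent of the controlling term about its maximizer and show that, to leading order, the residual $Y(r_c) - \mathcal{F}_h^{t_0}(r_c)$ is (a constant multiple of) $\cos\!\big(2\pi (t_0 - t_0^\ast)/h\big)$ plus higher-order-in-$\varepsilon$ corrections, where $t_0^\ast$ is a $t_0$-independent phase determined by $\log(2\lambda r_c)$ and the quadrature step $h$. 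This is exactly the mechanism behind the Poisson-summation identity \eqref{eq::1.9.}: the $m=\pm 1$ Fourier modes $\widehat{f}(\pm 1/h, r_c) e^{\pm 2\pi i t_0/h}$ are the leading aliasing contributions, and summing the $+1$ and $-1$ modes produces precisely a cosine in $t_0$ with period $h$. A cosine of amplitude $O(\varepsilon)$ has exactly two zeros in any half-open period $[0,h)$, located symmetrically at quarter and three-quarter points relative to its phase; matching the phase $t_0^\ast$ against $\log(2\lambda r_c)$, and absorbing the integer ambiguity into the constant $C$ that pins $\alpha_1,\alpha_2$ into $[0,h)$, yields the stated roots $\alpha_1 = Ch + h/4 - \log(2\lambda r_c)$ and $\alpha_2 = Ch + 3h/4 - \log(2\lambda r_c)$.

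The main obstacle I anticipate is making the reduction to "cosine plus small correction" rigorous, i.e.\ controlling the non-leading terms: the contributions of the $|m|\ge 2$ aliasing modes, the higher Gaussians in the dropped tails, and the curvature corrections from Taylor-expanding $f$'s exponent about its maximizer. One must verify that all of these are strictly smaller (by a factor like $e^{-\pi^2/h}$ or $e^{-c/h}$) than the amplitude of the leading cosine, so that Rouché-type or intermediate-value reasoning forces exactly two sign changes and no more. This requires the SOG approximation to be chosen with $h$ small enough (equivalently $\varepsilon$ small enough) — which is why the statement says "\emph{An} SOG approximation can be obtained" rather than "every": one needs the freedom to shrink $h$ below a threshold depending on $\lambda$ and $r_c$ so the perturbative hierarchy holds. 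A secondary, more bookkeeping-level obstacle is checking that the relevant Gaussian index is interior to the truncation window (so that dropping it is not what dominates) versus exactly at the boundary; handling the case $t_0 = 0$ carefully (half-open interval) and confirming the two roots are genuinely distinct, i.e.\ separated by $h/2$, rounds out the argument.
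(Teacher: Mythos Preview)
Your core mechanism is the same as the paper's: use the Poisson-summation/aliasing identity \eqref{eq::1.9.} to see that the residual $Y(r_c)-\mathcal{F}_h^{t_0}(r_c)$ is, to leading order, a cosine in $t_0$ with period $h$ and phase governed by $\log(2\lambda r_c)$, hence two zeros per period at the quarter points. However, you miss the paper's one key simplification: it lets $M_1,M_2\to\infty$ at the outset, so the truncation tails vanish identically and the residual is \emph{exactly} the aliasing sum $E_A=\sum_{m\neq 0}\widehat f(m/h,r_c)\,e^{2\pi i t_0 m/h}$. This removes the entire second-paragraph detour you take through the lower-tail bound \eqref{eq::1.11} and the Taylor expansion of the Gaussian exponent about its maximizer; once you work with the full bi-infinite sum, the oscillatory factor $(2\lambda r_c)^{2\pi i m/h}e^{2\pi i t_0 m/h}=e^{2\pi i m(t_0+\log(2\lambda r_c))/h}$ falls out of Lemma~\ref{lemma::1.3} directly, and taking real parts immediately gives the periodic structure with the claimed phase shift. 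So your real-space analysis, while not wrong, is unnecessary machinery.

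Your identified obstacles---controlling the $|m|\ge 2$ aliasing modes so that exactly two zeros survive---are legitimate and are in fact glossed over in the paper's short proof, which simply invokes ``the properties of cosine function over a period'' without bounding the higher harmonics against the $m=\pm 1$ amplitude. In that sense your proposal is more rigorous than the paper's argument. But the tail-control portion of your obstacle list (dropped Gaussians, curvature corrections) disappears once you adopt the $M_1,M_2\to\infty$ reduction, leaving only the harmonic-dominance issue, which is handled by the exponential decay $e^{-\pi^2 m/h}$ from Lemma~\ref{lemma::1.4} for $h$ small enough.
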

\begin{proof}[Proof]
	Let $M_1$ and $M_2$ approach the infinity such that the approximation error is exactly $E_{A}$. Recalling Eq.~\eqref{eq::1.9.}, it can be observed that the expression of $E_A$ incorporates an oscillatory factor $(2\lambda r)^{2\pi im/h}e^{2\pi i t_0m/h}$. As the left-hand side of Eq.~\eqref{eq::1.7} is composed entirely of real numbers, there is no need to take into account errors in the imaginary part. Consequently, $E_A$ exhibits periodicity with respect to $t_0$ and has a period of $h$, with a phase shift of $\log(2\lambda r)$. By using the properties of cosine function over a period, one completes the proof.
\end{proof}

The smoothness of SOG decomposition can be increased by modifying the weights and bandwidths of the far-field Gaussians. For instance, a $C^1$ decomposition requires the force continuity condition
\begin{equation}\label{eq::1.25}
\frac{d}{dr}\left[Y(r)-\mathcal{F}_{h}^{t_0}(r)\right]{\bigg|}_{r=r_c}=0
\end{equation}
to be satisfied. For fixed $h$ and $t_0$, a $C^1$ decomposition can be constructed by varying the coefficient of the narrowest Gaussian of the SOG. Let us write
\begin{equation}\label{eq::2.41}
	\mathcal{F}_{h}^{t_0}(r)=h\left[w_{M_2}f\left(t_{M_2},r\right)+\sum_{m=-M_1}^{M_2-1}f\left(t_m,r\right)\right]
\end{equation} 
and express $w_{M_2}$ as a function of $t_0$ according to Eq.~\eqref{eq::2.38}, 
\begin{equation}\label{eq::2.42}
	w_{M_2}=\frac{1}{hf(t_{M_2},r_c)}\left[Y(r_c)-h\sum\limits_{m=-M_1}^{M_2-1}f\left(t_m,r_c\right)\right].
\end{equation}
Then, one solves Eq.~\eqref{eq::1.25} to determine $t_0$. It is preferable to adjust the parameters defining the narrowest Gaussian to prevent large errors far from the cutoff radius $r_c$. Since Eq.~\eqref{eq::1.25} is highly-nonlinear, it can be solved by applying the Newton's method and may have many solutions. A principle by rule of thumb is to choose a solution such that the residual of the $C^2$ continuity condition is minimized. If a $C^2$ continuity is desired, the bandwidth of the narrowest Gaussian can be also adjusted. For higher order of smoothness this process can be repeated until all conditions are satisfied. The procedure for constructing the SOG decomposition is summarized in Algorithm~\ref{al::SOG}.

\begin{algorithm}[!ht]
	\caption{(SOG decomposition of the Yukawa kernel)}\label{al::SOG}
	\begin{algorithmic}[1]
\State \textbf{Input}: The screening length $\lambda$, near-field cutoff $r_c$, error tolerance $\epsilon$, and the required order of smoothness for the decomposition. 
\State Substituting $\delta=r_c$ and $\varepsilon$ into Eq.~\eqref{eq::2.30} to compute the step size $h$.
\State  Solve the $C^0$ condition Eq.~\eqref{eq::2.38} and choose the smallest one in $[0,h)$ to obtain the starting point $t_0$. \State Substituting $\lambda$, $\varepsilon$, $h$, $t_0$, $\delta=r_c$ and $R=33\lambda$ into~\eqref{eq::2.33} and \eqref{eq::2.34} to obtain $M_1$ and $M_2$.
\State Using the resulting SOG approximation to construct the far-field and near-field components of the SOG decomposition, as described in Eqs.~\eqref{eq::far} and \eqref{eq::short}, respectively. 
\State If the force continuity is required, solve for $w_{M_2}$ by substituting Eq.~\eqref{eq::2.42} into Eq.~\eqref{eq::1.25} to ensure $C^1$ continuity in the decomposition. For higher-order smoothness, repeat this process, adjusting other parameters as needed, until all conditions are satisfied.
\State \textbf{Output}: The SOG decomposition of the Yukawa kernel.
\end{algorithmic}
\end{algorithm}

In MD simulations, a $C^1$ continuity is typically sufficient to ensure long-term stability. Table~\ref{tab::1.1} displays the values of $h$, $t_0$, $M_1$, $M_2$, and $w_{M_2}$ for a $C^1$-continuous SOG decomposition at different accuracy levels. To balance computational cost and precision, one sets the cutoff radius $r_c=5\lambda$ as suggested in the recent fast Ewald summation work~\cite{dharuman2017generalized}. From Table~\ref{tab::1.1}, it is observed that the values of $w_{M_2}$ are very close to $1$, and $17$ Gaussians are needed to achieve $13$-digit accuracy. For the commonly used $3$-digit precision in practical simulations, only $3$ Gaussians are required. A similar table for $C^2$-continuous SOG decomposition is provided in \ref{app::C2continuous}, showing comparable Gaussian requirements. These observations indicate that the SOG approximation is highly efficient. 

\renewcommand\arraystretch{1.3}
\begin{table}[H]
\caption{Parameter sets for $C^{1}$-continuous SOG decomposition. $M_{\text{tot}}:=M_1+M_2+1$ is the minimal number of Gaussians satisfying the error criteria on the region $[r_c,R]$ with $r_c=5\lambda$, $R=33\lambda$, and $\lambda=0.5773$.}
\centering
\begin{tabular}{ccccccc}
\hline
$\epsilon$ &$h$ &$ t_0 $ &$M_1$ &$M_2$ &$M_{\rm tot}$ &$\omega_{M_2}$ \\ \hline
		$10^{-3}$    &1.131155934143089 &0.162000562164036 &3  &-1 &3  &0.935842393886743 \\ \hline
		$10^{-4}$   &0.894984933518395 &0.107611873115997  &4  &-1 &4  &1.022759365476827 \\ \hline
		$10^{-5}$ 	  &0.740391708745519 &0.675290708263873  &5  &-1 &5  &0.985751471005324 \\  \hline
		$10^{-7}$  	&0.550285792019561 &0.106160774592953 &6  &0 &7 &1.002417841278074 \\  \hline
		$10^{-9}$ 	  &0.437859267899280 &0.078115088821801  &8  &1 &10  &1.000300173074238 \\  \hline
		$10^{-11}$    &0.363578174148321 &0.169191259198386 &11  &2  &14 &0.996791207311602 \\ \hline
		$10^{-13}$ 	&0.310844614243983 &0.009697549456972 &13  &3 &17 &1.000600251730658 \\  
		\hline
	\end{tabular}
\label{tab::1.1}
\end{table}

\subsection{Error estimate}
In this section, we extend the pointwise error estimate of the SOG approximation from Section~\ref{sec::SOGapproximation} to the energy and force calculations when the corresponding SOG decomposition is applied to the Yukawa potential. These accuracy measures are critical for practical MD simulations. Periodic boundary conditions are assumed throughout the remainder of this paper.

The decomposition error of potential can be written in the form of
\begin{equation}
	\Phi_{\text{err}}(\bm{r}_i)=\sum_{\bm{n}\in\mathbb{Z}^3}\sum_{j=1}^{N}q_jK(|\bm{r}_{ij}+\bm{n}\circ\bm{L}|),
\end{equation}
where $\bm{r}_{ij}:=\bm{r}_j-\bm{r}_i$, $\bm{L}=(L_x,L_y,L_z)$, and the kernel function is defined by 
\begin{equation}
	K(r):=\left[Y(r)-h\sum_{m=-M_1}^{M_2}f\left(t_m,r\right)\right]H(r-r_c).
\end{equation}
Here, $H(r)$ denotes the Heaviside step function with $H(r)=1$ for $r\geq0$ and $0$ otherwise. Let us define the 3D Fourier transform in conjunction with its conjugated inverse transform as
\begin{equation}
	\widehat{f}(\boldsymbol{k}):=\int_{\Omega} f(\boldsymbol{r}) e^{-i \boldsymbol{k} \cdot \boldsymbol{r}} d \boldsymbol{r} \quad \text { and } \quad f(\boldsymbol{r})=\frac{1}{V} \sum_{\boldsymbol{k}} \widehat{f}(\boldsymbol{k}) e^{i \boldsymbol{k} \cdot \boldsymbol{r}},
\end{equation}
where $V=L_xL_yL_z$ represents the volume of simulation box and $\bm{k}$ is the Fourier mode defined by
\begin{equation}
	\bm{k}=2\pi\left(m_x/L_x,m_y/L_y,m_z/L_z\right)
\end{equation}
with $m_x,m_y,m_z\in\mathbb{Z}$. By the Fourier transform, the truncation errors in energy and force can be analytically written as
\begin{equation}\label{eq::Uerr}
	U_{\text{err}}=\frac{1}{2V}\sum_{\bm{k}}|\rho(\bm{k})|^2\widehat{K}(k)\quad\text{and}\quad \bm{F}_{\text{err}}(\bm{r}_i)=\frac{q_i}{V}\sum_{\bm{k}}\bm{k}|\text{Im}(e^{-i\bm{k}\cdot\bm{r}_i}\rho(\bm{k}))|\widehat{K}(k),
\end{equation}
where $\widehat{K}(k)$ is the Fourier transform of $K(r)$. Starting from Eq.~\eqref{eq::Uerr}, one can establish Theorem~\ref{thm::2.8}. Firstly, one needs Lemma~\ref{lemma::fourier} for the Fourier transform of a radially symmetric function~\cite{stein2011fourier}.
\begin{lemma}\label{lemma::fourier}
	Assume that the Fourier transform of $g(\bm{r})$ exists. If $g(\bm{r})$ is a radially symmetric function in 3D, its Fourier transform is also radially symmetric, expressed by
	\begin{equation}
		\widehat{g}(k)=4 \pi \int_0^{\infty} \frac{\sin (k r)}{k} g(r) r d r.
	\end{equation}
\end{lemma}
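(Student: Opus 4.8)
The plan is to reduce the three-dimensional integral defining $\widehat{g}$ to a one-dimensional radial integral by exploiting rotational symmetry, following the classical argument in \cite{stein2011fourier}. I would start from the definition $\widehat{g}(\bm{k})=\int_{\R^3} g(\bm{r}) e^{-i\bm{k}\cdot\bm{r}}\, d\bm{r}$, which converges absolutely by the hypothesis that the transform exists.

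First I would establish that $\widehat{g}$ is itself radially symmetric, since this is both part of the claim and the device that makes the evaluation tractable. For any rotation $R$ (an orthogonal matrix with unit determinant), the change of variables $\bm{r}\mapsto R\bm{r}$ has unit Jacobian and preserves the inner product, $\langle R\bm{k}, R\bm{r}\rangle = \langle \bm{k}, \bm{r}\rangle$. Combining this with the assumed invariance $g(R\bm{r}) = g(\bm{r})$ gives $\widehat{g}(R\bm{k}) = \widehat{g}(\bm{k})$ for every $R$, so $\widehat{g}$ depends only on $k=|\bm{k}|$. This legitimizes evaluating the transform at any convenient $\bm{k}$ with $|\bm{k}|=k$.

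Next I would align the polar axis of spherical coordinates $(r,\theta,\phi)$ with $\bm{k}$, so that $\bm{k}\cdot\bm{r}=kr\cos\theta$ and $d\bm{r}=r^2\sin\theta\, dr\, d\theta\, d\phi$. The azimuthal integral contributes a factor $2\pi$, and the substitution $u=\cos\theta$ turns the polar integral into $\int_{-1}^{1} e^{-ikru}\, du = 2\sin(kr)/(kr)$. Collecting the factors yields $\widehat{g}(k) = 2\pi\int_0^\infty g(r)\, r^2 \cdot \frac{2\sin(kr)}{kr}\, dr = 4\pi\int_0^\infty \frac{\sin(kr)}{k} g(r)\, r\, dr$, which is exactly the asserted formula.

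There is no deep obstacle here: the computation is elementary once symmetry is invoked. The only points requiring care are the two that underpin the reduction. The first is establishing the rotational invariance of $\widehat{g}$ \emph{before} aligning $\bm{k}$ with the polar axis, so that evaluating at a single specially oriented $\bm{k}$ entails no loss of generality. The second is invoking the assumed existence of the transform to guarantee absolute convergence, which in turn justifies passing to spherical coordinates and splitting the integral into the iterated radial and angular pieces.
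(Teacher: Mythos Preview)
Your argument is correct and is the classical derivation. The paper does not actually prove this lemma; it is stated as a standard result with a citation to \cite{stein2011fourier}, and your computation is precisely the textbook argument one finds there.
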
 

\begin{theorem}\label{thm::2.8}
The truncation errors of energy and force can be estimated by
	\begin{equation}\label{eq::2.28}
		\begin{split}
		|U_{\emph{err}}|&\simeq O\left(e^{-\pi^2/h}+e^{-(M_1+1)h/2-(4\lambda^2)^{-1}e^{(M_1+1)h}}+e^{-r_c^2e^{t_{M_2+1}}}\right),\\
		|\bm{F}_{\emph{err}}(\bm{r}_i)|&\simeq O\left(e^{-\pi^2/h}+e^{-3(M_1+1)h/2-(4\lambda^2)^{-1}e^{(M_1+1)h}}+e^{-r_c^2e^{t_{M_2+1}}+t_{M_2+1}}\right),
		\end{split}
	\end{equation}
    where $\simeq$ indicates ``asymptotically equal'' as $h\rightarrow 0$. 
\end{theorem}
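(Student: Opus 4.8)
The plan is to estimate the two sums in Eq.~\eqref{eq::Uerr} by reducing them to the pointwise bounds already established in the proof of Theorem~\ref{thm::1.1}, and then tracking the decay in $\bm{k}$. First I would invoke Lemma~\ref{lemma::fourier} to write $\widehat{K}(k) = 4\pi\int_{r_c}^{\infty}\frac{\sin(kr)}{k}\,\bigl[Y(r)-h\sum_{m=-M_1}^{M_2}f(t_m,r)\bigr]\,r\,dr$, since $K$ is radial and supported on $r\ge r_c$. The bracketed integrand is exactly the pointwise SOG approximation error on $[r_c,\infty)$, which Theorem~\ref{thm::1.1} and its proof decompose as $E_A + E_{T,1} + E_{T,2}$ with the explicit bounds in Eqs.~\eqref{eq::EaA}, \eqref{eq::1.11}, and \eqref{eq::ET2}. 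The three summands in each line of Eq.~\eqref{eq::2.28} correspond precisely to these three error sources: $e^{-\pi^2/h}$ from the aliasing error $E_A$ (leading term of the geometric series in Eq.~\eqref{eq::EaA}, with $h\to0$ so $1-e^{-\pi^2/h}\to1$), $e^{-(M_1+1)h/2-(4\lambda^2)^{-1}e^{(M_1+1)h}}$ from the lower-tail truncation $E_{T,1}$ (Eq.~\eqref{eq::1.11} with $t_{-M_1}$ replaced by $t_{-M_1-1}=-(M_1+1)h+t_0$, absorbing $t_0$ and constants into the asymptotic), and $e^{-r_c^2 e^{t_{M_2+1}}}$ from the upper-tail truncation $E_{T,2}$ (Eq.~\eqref{eq::ET2} with $\delta=r_c$).

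Next I would handle the $\bm{k}$-sums. For $U_{\text{err}}$, the structure-factor prefactor satisfies $|\rho(\bm{k})|^2 \le (\sum_j|q_j|)^2$ uniformly, so it contributes only a constant; what matters is $\sum_{\bm{k}}|\widehat{K}(k)|$. Since $K$ is smooth (in fact $C^1$ or better by the decomposition construction) and rapidly decaying, $\widehat{K}(k)$ decays fast enough in $k$ that the $\bm{k}$-sum converges and is dominated by its low-$k$ contribution; hence $|U_{\text{err}}| \lesssim \max_{r\ge r_c}|K(r)|$ up to a constant, giving the first line. For $\bm{F}_{\text{err}}$, there is an extra factor of $\bm{k}$ (and the prefactor $|\text{Im}(e^{-i\bm{k}\cdot\bm{r}_i}\rho(\bm{k}))|\le\sum_j|q_j|$), so one works instead with $\widehat{K}'$-type quantities: integrating by parts in Lemma~\ref{lemma::fourier}, the factor $k$ converts $\widehat{K}(k)$ into a transform of $\frac{d}{dr}[rK(r)]$ (or equivalently of $K'(r)$ plus lower-order terms), i.e. the force error is governed by the pointwise error in the \emph{derivative} of the SOG approximation. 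Differentiating $f(t,r)$ in $r$ brings down a factor $\sim r e^{t}$, which shifts each tail bound: the lower-tail derivative picks up the extra $e^{t_{-M_1-1}}\sim e^{(M_1+1)h}$, turning $e^{-(M_1+1)h/2}$ into $e^{-3(M_1+1)h/2}$; the upper-tail derivative picks up $e^{t_{M_2+1}}$, producing the extra $+t_{M_2+1}$ in the exponent; and the aliasing term, being $r$-dependent only through the slowly varying $(2\lambda r)^{\pm1/2+2\pi ik/h}$ and $K_{\nu}(r/\lambda)$, retains its $e^{-\pi^2/h}$ rate. Assembling these with the triangle inequality over the (finite-constant) $\bm{k}$-sum yields the second line of Eq.~\eqref{eq::2.28}.

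The main obstacle I anticipate is making the passage from the pointwise/integrated bounds to the $\bm{k}$-sums genuinely rigorous at the claimed asymptotic order, rather than merely plausible: one must verify that the oscillatory integral $\int_{r_c}^{\infty}\sin(kr)\,[\cdots]\,r\,dr$ does not lose the sharp exponential rates when summed against all Fourier modes, and in particular that the tail contributions from large $|\bm{k}|$ are negligible compared to the three named terms. This requires a uniform-in-$k$ version of the pointwise estimates — essentially bounding $\int_{r_c}^{\infty}|K(r)|\,r\,dr$ and $\int_{r_c}^{\infty}|\tfrac{d}{dr}(rK(r))|\,dr$ by the same three exponentials — which is why the proof only claims asymptotic equality as $h\to0$ (the regime in which the $r$-integration constants and the $\bm{k}$-sum constants are subdominant). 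A secondary subtlety is that the $C^1$ (or higher) modification of the narrowest Gaussian, Eqs.~\eqref{eq::2.41}–\eqref{eq::2.42}, perturbs the weight $w_{M_2}$ away from $1$; one must check this perturbation is $O(\varepsilon)$ (as Table~\ref{tab::1.1} suggests) so that it does not degrade any of the three rates.
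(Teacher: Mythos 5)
Your proposal is correct and follows essentially the same route as the paper: split $K$ into the aliasing piece and the two truncation tails, reduce the $\bm{k}$-sums to radial Fourier integrals via Lemma~\ref{lemma::fourier} with the structure-factor prefactor contributing only a constant, recycle the pointwise bounds $E_A$, $E_{T,1}$, $E_{T,2}$ from the proof of Theorem~\ref{thm::1.1}, and handle the extra factor of $\bm{k}$ in the force by passing to the $r$-derivative of the kernel error, which supplies the factors $e^{t_{-M_1-1}}$ and $e^{t_{M_2+1}}$ in the respective tails. (One cosmetic slip: $e^{t_{-M_1-1}}\sim e^{-(M_1+1)h}$, not $e^{+(M_1+1)h}$; the conclusion you draw from it, namely $e^{-(M_1+1)h/2}\to e^{-3(M_1+1)h/2}$, is the correct one.)
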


\begin{proof}
Similar to the proof of Theorem~\ref{thm::1.1}, $K(r)$ can be divided into three parts, 
\begin{equation}
\begin{split}
K(r)&=h\left[\sum_{m=-\infty}^{-M_1-1}f(t_m,r)+\sum_{m=M_2+1}^{\infty}f(t_m,r)+\left(\frac{e^{-r/\lambda}}{r}-h\sum_{m=-\infty}^{\infty}f(t_m,r)\right)\right]H(r-r_c)\\
&:=K_{T,1}(r)+K_{T,2}(r)+K_A(r),
\end{split}
\end{equation}
where the first two terms arise from truncating the Gaussian terms, and the third term is due to the quadrature error of the trapezoidal rule. Therefore, $U_{\text{err}}$ can be rewritten as
\begin{equation}\label{eq::split}
	\begin{split}
	U_{\text{err}}=\frac{1}{2V}\sum_{\bm{k}}|\rho(\bm{k})|^2\left[\widehat{K}_{T,1}(k)+\widehat{K}_{T,2}(k)+\widehat{K}_A(k)\right]:=U^{T,1}_{\text{err}}+U^{T,2}_{\text{err}}+U_{\text{err}}^A.
	\end{split}
\end{equation} 
For each component, the sum over Fourier modes $\bm{k}$ can be asymptotically and safely approximate by an integral~\cite{kolafa1992cutoff,liang2023error},
\begin{equation}\label{eq::asy}
	\sum_{\bm{k}} \simeq \frac{V}{(2 \pi)^3} \int_{0}^{\infty} k^2 d k \int_{-1}^1 d \cos \theta \int_0^{2 \pi} d \varphi,
\end{equation}
where $(k,\theta,\varphi)$ are the spherical coordinates and $\simeq$ indicates asymptotically equal in the mean field limit. By applying Eq.~\eqref{eq::asy} to Eq.~\eqref{eq::split} and using Lemma~\ref{lemma::fourier}, one can derive three oscillating Fourier integrals which can be estimated by directly extending the results of $E_{T,1}$, $E_{T,2}$, and $E_{A}$ in the proof of Theorem~\ref{thm::1.1}. Similarly, this procedure can be employed to estimate the error of force. 
\end{proof}

\begin{figure}[!ht]
	\centering
\includegraphics[width=0.9\linewidth]{./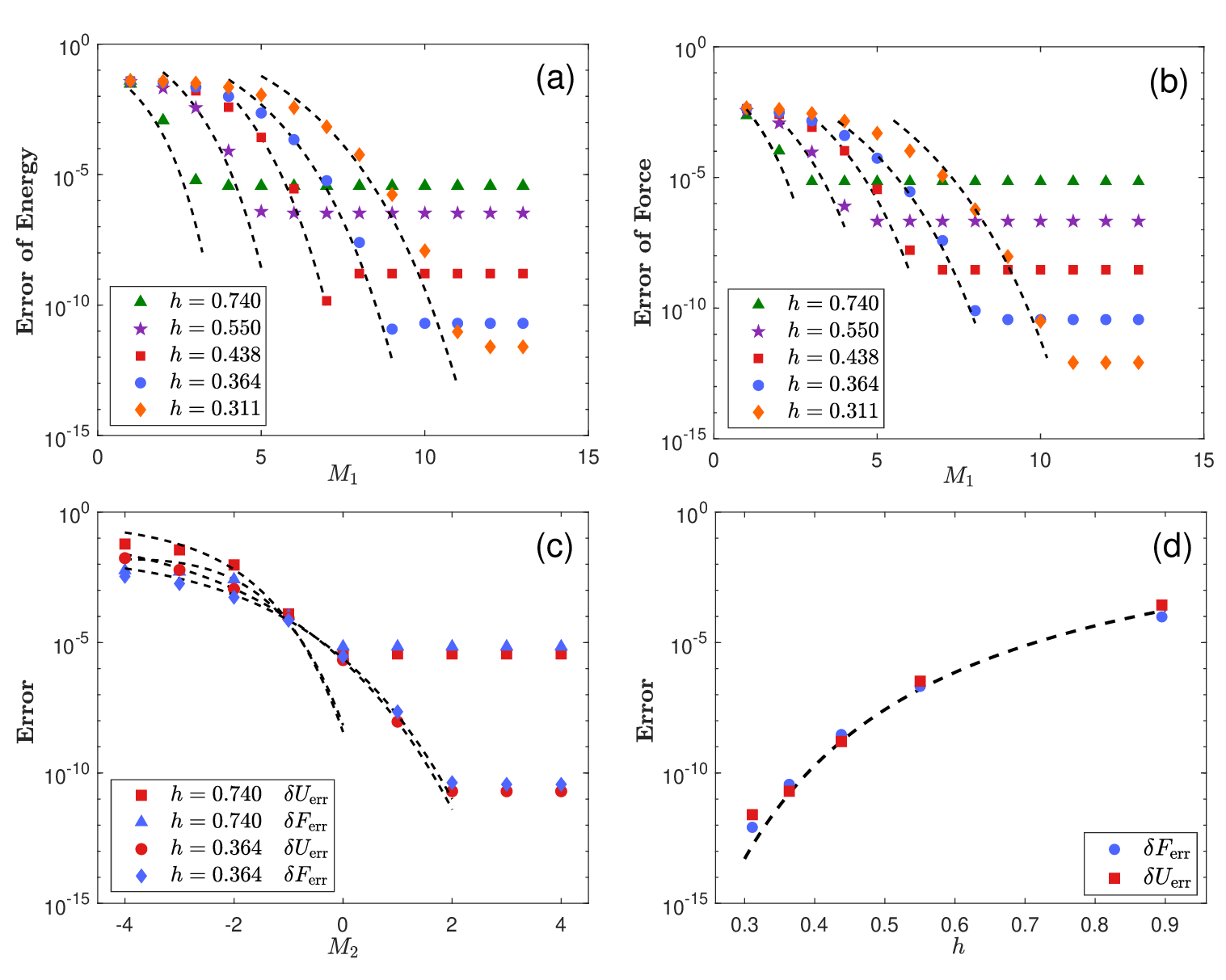}
	\caption{Truncation errors of the SOG decomposition for the YOCP system with $\lambda=0.5773$. (a-b): Relative errors of the energy and the force, respectively, as functions of $M_1$, with other parameters set to not impact the accuracy. (c): Relative errors for the energy and the force against $M_2$ for $h=0.740$ and $0.364$. (d): Relative errors as functions of the discretization size $h$. The dashed lines in all panels represent the corresponding theoretical estimates by Theorem~\ref{thm::2.8}.}
	\label{fig:truncationerror}
\end{figure}

\begin{figure}[!ht]
	\centering
\includegraphics[width=0.9\linewidth]{./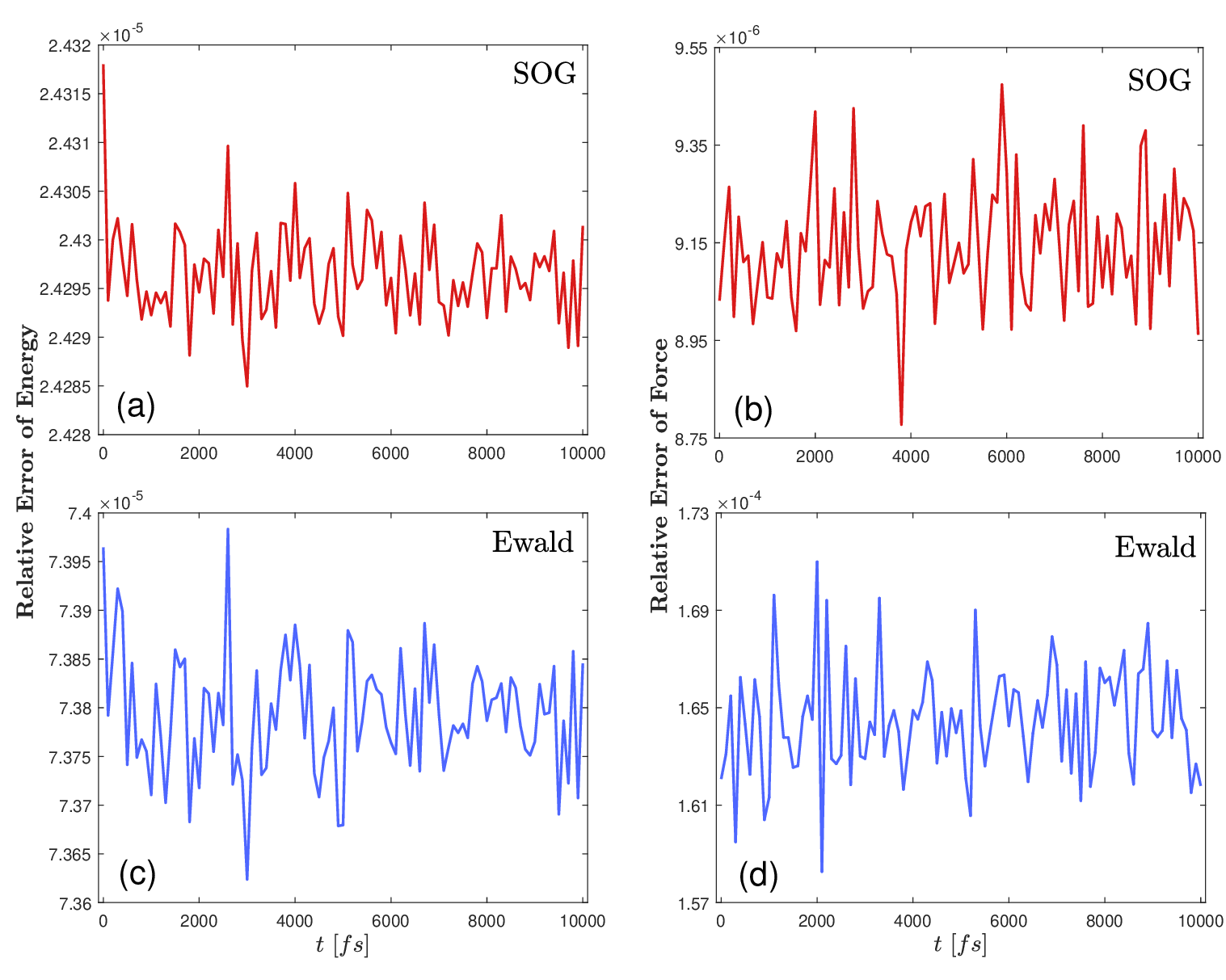}
	\caption{Relative errors in energy and force for different kernel decompositions across 100 configurations of a YOCP system. The configurations were sampled at 100 $fs$ intervals from a 10000 $fs$ simulation.  (a-b): Results using the $C^1$-continuous SOG decomposition at an accuracy of $10^{-5}$ (third line in Table~\ref{tab::1.1}). (c-d): Results using the Ewald decomposition, with a matching Fourier space decay rate.}
	\label{fig:sogrelative}
\end{figure}

Theorem~\ref{thm::2.8} indicates that the decay rate with the number of Gaussians is $O(e^{-t_{M_1+1}/2})$ for the energy and $O(e^{-3t_{M_1+1}/2})$ for the force, respectively, with lower bounds set by the other two terms in Eq.~\eqref{eq::2.28}. To validate Theorem~\ref{thm::2.8}, numerical calculation on a Yukawa one-component plasma (YOCP) system with side length $1nm$ and $10^5$ monovalent cations is performed, averaging over $100$ different configurations. Figure~\ref{fig:truncationerror}(a-b) shows relative errors in the energy and the force versus $M_1$, with dashed lines representing $O(e^{-t_{M_1+1}/2})$ and $O(e^{-3t_{M_1+1}/2})$ scaling, respectively. Figure~\ref{fig:truncationerror}(c) shows errors as functions of $M_2$, with dashed lines indicating $O(e^{-r_c^2e^{t_{M_2+1}}})$ and $O\left(e^{-r_c^2e^{t_{M_2+1}}+t_{M_2+1}}\right)$ scaling for the energy and the force. Figure~\ref{fig:truncationerror}(d) illustrates the errors versus $h$, with dashed lines show $O\left(e^{-\pi^2/h}\right)$ scaling. All the results agree well with Theorem~\ref{thm::2.8}. 

Additionally, we compare our results with those of the Ewald decomposition~\cite{dharuman2017generalized}, adjusting the parameters to match the decay rate for the Fourier space of its far-field component to that of the $C^1$-continuous SOG decomposition at an accuracy of $10^{-5}$. For both methods, the near-field cutoff is set to $r_c = 2.8865~\mathring{A}$. The average relative errors in energy and force are shown in Figure~\ref{fig:sogrelative}. The results indicate that the errors for the Ewald decomposition are approximately $3$ and $10$ times larger, respectively, than those of the SOG decomposition proposed in this paper. This improvement, attributed to the $C^1$ smoothness of the SOG decomposition, is expected to benefit fast algorithms.

\section{Fast algorithm}\label{sec::fast}
In this section, we first derive the Fourier spectral expansion for the far-field energy and force using the SOG decomposition introduced in the previous section. We then review the random mini-batch technique and explain why existing sampling strategies are inefficient for Yukawa systems. Finally, we propose a novel adaptive importance sampling strategy tailored for the Yukawa kernel, leading to a fast and adaptive RBSOG method with $O(N)$ complexity.

\subsection{Fourier spectral expansion}
By using the SOG decomposition in Eq.~\eqref{eq::2.36}, the energy of Yukawa systems can be expressed as a combination of near-field, far-field, and self-energy terms:
\begin{equation}\label{eq::3.1}
	U=U_{\mathcal{N}}+U_{\mathcal{F}}+U_{\text{self}}.
\end{equation}
Here, the first two parts are given by
\begin{equation}
	U_{\mathcal{N}}=\frac{1}{2} \sum_{\boldsymbol{n}}\,^{\prime} \sum_{i, j} q_i q_j \mathcal{N}_{h}^{t_0}\left(\left|\boldsymbol{r}_{i j}+\boldsymbol{n} \circ \boldsymbol{L}\right|\right),
\end{equation}
and
\begin{equation}
	U_{\mathcal{F}}=\frac{1}{2}\sum_{\boldsymbol{n}} \sum_{i, j} q_i q_j \mathcal{F}_{h}^{t_0}\left(\left|\boldsymbol{r}_{i j}+\boldsymbol{n} \circ \boldsymbol{L}\right|\right),
\end{equation}
respectively, where the prime indicates that the case $i=j$ with $\bm{n}=\bm{0}$ is excluded in the summation, and ``$\circ$'' represents the Hadamard product of two vectors. We adopt the $C^1$-continuous SOG decomposition, which is typically sufficient for stable simulations. The self-energy term $U_{\text{self}}$ is to exclude unwanted self interactions, and is given by
\begin{equation}
	U_{\text{self}}=-\frac{1}{2}\sum_{i=1}^{N}q_{i}^2\mathcal{F}_{h}^{t_0}\left(0\right)=-\frac{h}{2}\sum_{i=1}^{N}q_{i}^2\left[w_{M_2}f\left(t_{M_2},0\right)+\sum_{m=-M_1}^{M_2-1}f\left(t_m,0\right)\right].
\end{equation}

By the SOG decomposition, the sum in $U_{\mathcal{N}}$ exhibits rapid convergence, allowing for truncation at a real-space cutoff $r_c$. In contrast, the sum in $U_{\mathcal{F}}$ converges slowly but the kernel function is smooth, and can thus be treated in Fourier space for efficiency. By the Fourier transform, the far-field part is rewritten as
\begin{equation}\label{eq::3.5}
	\begin{split}
		U_{\mathcal{F}}=\frac{1}{2V}\sum_{\bm{k}}\widehat{\mathcal{F}}_{h}^{t_0}(\bm{k})\left|\rho(\bm{k})\right|^2, 
	\end{split}
\end{equation} 
where 
\begin{equation}
	\widehat{\mathcal{F}}_{h}^{t_0}(\bm{k})=\pi^{3/2}h\sum_{m=-M_1}^{M_2}{}^\#f\left(t_m,0\right)e^{-3t_{m}/2-e^{-t_{m}}k^2/4}
\end{equation}
represents the Fourier transform of $\mathcal{F}_{h}^{t_0}(\bm{r})$, and the symbol ``$\#$'' represents the multiplication of the $m=M_2$ term by the factor $w_{M_2}$. The reciprocal sum over $\bm{k}$ in Eq.~\eqref{eq::3.5} is absolutely convergent due to the fast decay of Gaussian functions.

With the Fourier spectral expansion of the far-field energy, the force exerts on the $i$th particle is derived by taking the negative gradient of the energy, expressed by
\begin{equation}\label{eq::3.8}
	\begin{split}
		\bm{F}(\bm{r}_i)&=\sum_{j\in\mathcal{I}_i} q_i q_j\left[F_{\mathcal{N},1}(r_{ij})+F_{\mathcal{N},2}(r_{ij})\right]\bm{r}_{ij}-\frac{q_{i}}{V}\sum_{\bm{k}}\bm{k}\widehat{\mathcal{F}}_{h}^{t_0}(\bm{k})\operatorname{Im}\left(e^{-i \boldsymbol{k} \cdot \boldsymbol{r}_i} \rho(\boldsymbol{k})\right)\\[0.7em]
		&:=\bm{F}_{\mathcal{N}}(\bm{r}_i)+\bm{F}_{\mathcal{F}}(\bm{r}_i)
	\end{split}
\end{equation}
where $\mathcal{I}_i$ denotes the neighbor list of the $i$th particle, and $F_{\mathcal{N},1}$ and $F_{\mathcal{N},2}$ are defined by 
\begin{equation}
	F_{\mathcal{N},1}(r):=\frac{e^{-r/\lambda}}{r^3}(1+r/\lambda),
\end{equation}
\begin{equation}
	F_{\mathcal{N},2}(r):=-2h\left[w_{M_2}e^{t_{M_2}}f\left(t_{M_2},r\right)+\sum_{m=-M_1}^{M_2-1}e^{t_m}f\left(t_m,r\right)\right].
\end{equation}
In Eq.~\eqref{eq::3.8}, $\bm{F}_{\mathcal{N}}$ and $\bm{F}_{\mathcal{F}}$ are the near- and far-field parts of the force, and are treated in the real and Fourier spaces, respectively. 

Let us study the complexity of evaluating $U$ and $\bm{F}$ using Eqs.~\eqref{eq::3.1} and ~\eqref{eq::3.8}. The computational cost for the near-field part is $O( r_c^3N)$ due to the truncation. 
The cost of evaluating the far-field part depends on the specific method. For direct truncation, the Fourier space cutoff satisfies $k_c\sim e^{t_{M_2}/2}:=C_{M_2}$, which is inversely proportional to the bandwidth of the narrowest Gaussian. By substituting $\delta=r_c$ into Eq.~\eqref{eq::2.34}, it can be deduced that $C_{M_2}\sim r_c$. Since the Fourier modes within the cutoff are proportional to $k_c^3=O(1/r_c^3)$, minimizing the total cost results in a complexity of $O(C_{M_2}^{3/2})$. Alternatively, if the FFT is used for acceleration, the mesh spacing is also proportional to $C_{M_2}$~\cite{shan2005gaussian}, leading to a cost that scales as $O(\mathcal{G}\log\mathcal{G}+N\log N)$ with $\mathcal{G}\sim1/C_{M_2}^3$ the number of grids. The communication cost of FFT is $O(N)$~\cite{ayala2021scalability} for each calculation.

\subsection{Random batch importance sampling technique}\label{subsec::RBS}
Another way to evaluate $U_{\mathcal{F}}$ and $\bm{F}_{\mathcal{F}}$ is the so-called random batch importance sampling methods, including the RBE~\cite{Jin2020SISC} and RBSOG~\cite{Liang2023SISC}, which were originally developed for the pure Coulomb kernel ($\lambda\rightarrow\infty$). The key difference between the RBE and RBSOG lies in the decomposition method: the RBE is based on the Ewald decomposition, while the RBSOG utilizes the SOG decomposition. In this work, we aim to extend the RBSOG framework to efficiently handle Yukawa systems. 

Unlike deterministic methods that rely on FFT or FMM-based techniques to reduce complexity, the RBSOG method utilizes mini-batch stochastic approximation over Fourier modes, combined with importance sampling to achieve a reduced variance. Let us consider the Fourier sum over $\bm{k}$ for $\bm{F}_{\mathcal{F}}$. One can alternatively understand the Fourier sum as an expectation 
\begin{equation}
\bm{F}_{\mathcal{F}}^{\mathscr{P}}(\bm{r}_i)=\frac{q_i}{V}\mathbb{E}_{\bm{k}\sim \mathscr{P}(\bm{k})}\left[\frac{\bm{k}\widehat{\mathcal{F}}_{h}^{t_0}(\bm{k})\operatorname{Im}\left(e^{-i \boldsymbol{k} \cdot \boldsymbol{r}_i} \rho(\boldsymbol{k})\right)}{\mathscr{P}(\bm{k})}\right],
\end{equation}
where $\mathscr{P}(\bm{k})$ is a discrete measure, referred to as the ``importance''. Instead of computing the summation directly or using the FFT, a mini-batch of $\bm{k}$ (with batch size $P$) sampled from $\mathscr{P}(\bm{k})$ are employed to estimate the expectation, resulting in an efficient stochastic method.
 
However, when designing algorithms for Yukawa systems, directly applying the random mini-batch idea presents significant challenges. To better understand this issue, we review some importance sampling strategies previously proposed for the Coulomb case and discuss why they are inefficient for Yukawa systems. In \cite{Jin2020SISC}, it is suggested that $\mathscr{P}(\bm{k})$ should be chosen as the far-field Gaussian itself, since the Gaussian is summable and can be normalized into a discrete distribution. In~\cite{Liang2023SISC}, it is proposed that $\mathscr{P}(\bm{k})$ should be chosen as $k^2$ multiplied by the SOG $\widehat{\mathscr{F}}_{h}^{t_0}$, which improves efficiency compared to the approach in \cite{Jin2020SISC}. However, these strategies are effective only under the limit $\lambda\rightarrow\infty$ and the charge neutrality condition $\sum_{i=1}^{N}q_i=0$. In Yukawa systems, where electrons are implicitly treated and $\lambda$ varies within $(0,\infty)$, both conditions are typically \emph{not satisfied}. As a result, these choices of $\mathscr{P}(\bm{k})$ can lead to significant variance, particularly in weakly coupled YOCPs, as will be numerically demonstrated in Section~\ref{subsec::variancereduction}.

\subsection{Adaptive importance sampling under the random batch framework}
\label{subsec:important}
We propose a new adaptive importance sampling strategy to achieve a near-optimal variance reduction in Fourier-space calculations, resulting in a random batch sum-of-Gaussians method for the Yukawa kernel. This method approximates $\bm{U}_{\mathcal{F}}$ and $\bm{F}_{\mathcal{F}}$ with $O(N)$ computational cost and $O(1)$ communication cost per calculation.  

More precisely, our purpose is to find a proper sampling measure in the form
\begin{equation}\label{eq::3.124}
	\mathscr{P}(\bm{k}):=\dfrac{\mathscr{F}(\bm{k})\widehat{\mathcal{F}}_{h}^{t_0}(\bm{k})}{S},
\end{equation}
where the SOG $\widehat{\mathcal{F}}_{h}^{t_0}$ is added to make $\mathscr{P}(\bm{k})$ summable, $S$ is the normalization factor, and $\mathscr{F}(\bm{k})$ serves as a correction. By Eq.~\eqref{eq::3.124}, one has random estimators
\begin{equation}
\label{eq::UF}
U_{\mathcal{F}}^{\mathscr{P},*}=\frac{S}{2PV}\sum_{\ell=1}^{P}\frac{\left|\rho(\bm{k}_{\ell})\right|^2}{\mathscr{F}(\bm{k}_{\ell})}\quad\text{and}\quad \bm{F}_{\mathcal{F}}^{\mathscr{P},*}(\bm{r}_i)=-\frac{S}{2PV}\sum_{\ell=1}^{P}\frac{\nabla_{\bm{r}_i}\left|\rho(\bm{k}_{\ell})\right|^2}{\mathscr{F}(\bm{k}_{\ell})},
\end{equation}
to approximate the far-field energy and force, respectively. Formally, an ``ideal'' choice for $\mathscr{F}(\bm{k})$ is $\mathscr{F}(\bm{k})=\left|\rho(\bm{k})\right|^2$ so that the variances of $U_{\mathcal{F}}^{\mathscr{P},*}$ and $\bm{F}_{\mathcal{F}}^{\mathscr{P},*}$ are minimized. However, this choice is not applicable as the knowledge of $\rho(\bm{k})$ is unknown. Instead, one takes 
\begin{equation}\label{eq::3.15}
	\mathscr{F}(\bm{k})=\langle\left|\rho(\bm{k})\right|^2\rangle,
\end{equation}
where $\langle\cdot\rangle$ represents the ensemble average. It serves as a good approximation of $|\rho(\bm{k})|$, as the ensemble average reflects the long-term behavior and captures the attributes of all accessible configurations~\cite{Frenkel2001Understanding}, while being computationally cheaper. In practice, we use a cost-effective formula to approximate $\langle\left|\rho(\bm{k})\right|\rangle$. In the liquid theory~\cite{hansen2013theory}, $\mathscr{F}(\bm{k})$ in Eq.~\eqref{eq::3.15} is known as the charge structure factor reflecting the influence of different kernel. Under the well-known Debye-H$\ddot{\text{u}}$ckel theory~\cite{hansen2013theory,hu2022symmetry}, the linearly screened system described in Section~\ref{Sec::2.1} leads to the charge response as $k\rightarrow 0$:
\begin{equation}\label{eq3::3.22}
	\langle\left|\rho(\bm{k})\right|^2\rangle=\dfrac{k_{\text{B}}TV}{k_{\text{B}}TV/\sum_{i=1}^{N}q_{i}^2+\epsilon(\bm{k})^{-1}/k^{2}}+O(k^4),
\end{equation}
where $k_{\text{B}}$ is the Boltzmann constant and $T$ is the temperature. The first term in the right hand side of Eq.~\eqref{eq3::3.22} is adaptive for different dielectric response functions $\epsilon(\bm{k})$. In practice, we truncate the right-hand side of Eq.~\eqref{eq3::3.22} to its first term and consider it an efficient correction to the sampling measure. During NVT ensemble simulations, the instantaneous temperature is computed on-the-fly, while other terms can be precomputed. If the simulation includes cell volume fluctuations, particle birth/death processes, or the variation of $\epsilon(\bm{k})$, the corresponding terms are also calculated in real time to adaptively obtain an approperiate correction of $\langle\left|\rho(\bm{k})\right|^2\rangle$ to enhance the sampling.

While the approximation in Eq.~\eqref{eq3::3.22} is less accurate for large-$k$ modes, it remains efficient due to the following reasons. In our RBSOG method, the Fourier transform is applied exclusively to the long-range component of the SOG decomposition, making long-wave modes more important as they correspond to spatially slow variations. These merits should also be inherited by the random batch importance sampling, and is optimal in sense of average energy fluctuation which will be demonstrated in Theorem~\ref{thm::3.2}. Interestingly, for the pure Coulomb kernel, i.e. $\lambda\rightarrow\infty$ and $\epsilon(\bm{k})\equiv 1$, one has $\langle\left|\rho(\bm{k})\right|^2\rangle= k_{\text{B}}TVk^2+O(k^4)$ by Eq.~\eqref{eq3::3.22},
which is also consistent with the theoretical work of
Stillinger and Lovett~\cite{stillinger1968general}. The detailed procedure for MD simulations is summarized in Algorithm~\ref{al::RBSOG}.

\begin{algorithm}[!ht]
	\caption{(Random batch sum-of-Gaussians algorithm for Yukawa systems)}\label{al::RBSOG}
	\begin{algorithmic}[1]
\State  \textbf{Input}: The screen length $\lambda$, real-space cutoff $r_c$, step size $\Delta t$ for time integration, total time steps $N_{T}$, batch size $P$. Initialize the size of the simulation box $\bm{L}=(L_x,L_y,L_z)$, as well as the positions, velocities, and charges of all particles. Construct the SOG decomposition by using Algorithm~\ref{al::SOG}. 
\For {$n \text{ in } 1: N_{T}$}
\State Draw $P$ Fourier modes $\{\bm{k}_{\ell}\}_{\ell=1}^{P}$ from $\mathscr{P}(\bm{k})$ and then approximate the far-field force using the estimator $\bm{F}_{\mathcal{F}}^{\mathscr{P},*}$, where the adaptive importance sampling strategy (say Eq.~\eqref{eq3::3.22}) is applied to reduce the variance.
\State Compute the near-field force $\bm{F}_{\mathcal{N}}$ by Eq.~\eqref{eq::3.8}.
\State If desired, the energy is obtained via Eq.~\eqref{eq::3.1}, while its far-field part is approximated by Eq.~\eqref{eq::UF}.
\State Integrate Newton's equations for $\Delta t$ time with appropriate integration scheme and thermostat.
\EndFor
\State \textbf{Output}: The configurations at each time step during the simulations.
\end{algorithmic}
\end{algorithm}

\begin{remark}
Compared to FFT-based Ewald summation, the RBSOG method offers three key advantages: 1) it uses an SOG decomposition with high regularity, addressing the discontinuity issue and reducing truncation error; 2) it is mesh-free and employs random batch sampling in Fourier space calculations, avoiding the communication-intensive FFT framework; 3) the proposed variance reduction strategy is kernel-adaptive, making the algorithm easily extendable to other dielectric response functions used in plasma simulations.  
\end{remark}

In our implementation of the proposed RBSOG method, we optimize for CPU parallelization and vectorization. The near-field calculations involve the kernel precomputation and tabulation. For instance, an inner cutoff $r_{\text{in}}<r_c$ is introduced. If $0<r\leq r_{\text{in}}$, $F_{\mathcal{N},1}$ is directly computed while $F_{\mathcal{N},2}$ is  approximated via Taylor expansions. When $r_{\text{in}}<r\leq r_c$, bitmask-based table lookup~\cite{wolff1999tabulated} technique is adopted to tabulate $F_{\mathcal{N},1}+F_{\mathcal{N},2}$, following by a linear interpolation to approximate the data between successive points in the table. The above strategies make the computational cost independent of the number of Gaussians. For far-field calculations, one uses the Metropolis-Hastings method~\cite{metropolis1953equation} to sample required batches from $\mathscr{P}(\bm{k})$. After that, both these samples and particles are packaged into vectors, and $\rho(\bm{k})$ at each core is computed locally followed by applying a global reduction to reduce $\rho(\bm{k})$ and broadcasting the result back to all cores. To improve the performance, modern domain decomposition techniques~\cite{thompson2022lammps} are also integrated in our code. 

\subsection{Analysis of the RBSOG algorithm}\label{sec::3.3}
Denote the fluctuation of approximation for the Fourier part of energy and force execting on particle $i$ by
\begin{equation}\label{eq::3.19}
	\Xi_{U}=U_{\mathcal{F}}^{\mathscr{P},*}-U_{\mathcal{F}}\,\quad\text{and}\quad\,\bm{\Xi}_{\bm{F},i}=\bm{F}_{\mathcal{F}}^{\mathscr{P},*}(\bm{r}_i)-\bm{F}_{\mathcal{F}}(\bm{r}_i),
\end{equation}
respectively. The expectations and variances can be obtained by direct calculations, and are given by the following Lemma~\ref{thm::unbia}. 
\begin{lemma}\label{thm::unbia}
	The fluctuation in energy $\Xi_{U}$ and force $\bm{\Xi}_{\bm{F},i}$ have zero expectations,
	\begin{equation}
		\mathbb{E}\Xi_{U}=0,\quad\,\mathbb{E}\bm{\Xi}_{\bm{F},i}=\bm{0},
	\end{equation}
    and the variances are given by
    \begin{equation}\label{eq::3.22}
    	\mathbb{E}|\Xi_{U}|^2=\dfrac{1}{P}\left[\frac{S}{4V^2}\sum_{\bm{k}}\dfrac{\widehat{\mathcal{F}}_{h}^{t_0}(\bm{k})\left| \rho(\boldsymbol{k})\right|^4}{\langle\left|\rho(\bm{k})\right|^2\rangle/V}-\left|U_{\mathcal{F}}\right|^2\right]
    \end{equation}
    and
    \begin{equation}\label{eq::3.24}
    	\mathbb{E}|\bm{\Xi}_{\bm{F},i}|^2=\dfrac{1}{P}\left[\frac{q_i^2S}{V^2}\sum_{\bm{k}}\dfrac{k^2\widehat{\mathcal{F}}_{h}^{t_0}(\bm{k})\left|\operatorname{Im}\left(e^{-i \boldsymbol{k} \cdot \boldsymbol{r}_i} \rho(\boldsymbol{k})\right)\right|^2}{\langle\left|\rho(\bm{k})\right|^2\rangle/V}-\left|\bm{F}_{\mathcal{F}}(\bm{r}_i)\right|^2\right],
    \end{equation}
    respectively.
\end{lemma}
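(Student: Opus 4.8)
The plan is to exploit the i.i.d.\ structure of the mini-batch $\{\bm{k}_\ell\}_{\ell=1}^{P}$ drawn from $\mathscr{P}$ and to reduce everything to the first two moments of a single-sample estimator. Write $X(\bm{k}):=\frac{S}{2V}\,|\rho(\bm{k})|^2/\mathscr{F}(\bm{k})$ and $\bm{Y}_i(\bm{k}):=-\frac{S}{2V}\,\nabla_{\bm{r}_i}|\rho(\bm{k})|^2/\mathscr{F}(\bm{k})$, so that the estimators in Eq.~\eqref{eq::UF} read $U_{\mathcal{F}}^{\mathscr{P},*}=\frac1P\sum_\ell X(\bm{k}_\ell)$ and $\bm{F}_{\mathcal{F}}^{\mathscr{P},*}(\bm{r}_i)=\frac1P\sum_\ell\bm{Y}_i(\bm{k}_\ell)$. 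First I would note that $\widehat{\mathcal{F}}_{h}^{t_0}(\bm{k})>0$ (by its explicit Gaussian form with $w_{M_2}>0$) and, for Yukawa systems, $\mathscr{F}(\bm{k})=\langle|\rho(\bm{k})|^2\rangle>0$ even at $\bm{k}=\bm 0$, so $\mathscr{P}$ in Eq.~\eqref{eq::3.124} is a genuine probability measure with $S<\infty$ thanks to the Gaussian decay of $\widehat{\mathcal{F}}_{h}^{t_0}$; hence all sums below converge absolutely.

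Next I would compute the single-sample expectations. Using $\mathbb{E}_{\bm{k}\sim\mathscr{P}}[\cdot]=\sum_{\bm{k}}\mathscr{P}(\bm{k})(\cdot)$ and $\mathscr{P}(\bm{k})=\mathscr{F}(\bm{k})\widehat{\mathcal{F}}_{h}^{t_0}(\bm{k})/S$, the factor $\mathscr{F}(\bm{k})$ in the numerator of $X$ cancels the one in $\mathscr{P}$ and $S$ cancels, leaving $\mathbb{E}X=\frac{1}{2V}\sum_{\bm{k}}\widehat{\mathcal{F}}_{h}^{t_0}(\bm{k})|\rho(\bm{k})|^2=U_{\mathcal{F}}$ by Eq.~\eqref{eq::3.5}. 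For the force I would first record the elementary identity $\nabla_{\bm{r}_i}|\rho(\bm{k})|^2=2q_i\bm{k}\,\operatorname{Im}\!\left(e^{-i\bm{k}\cdot\bm{r}_i}\rho(\bm{k})\right)$, obtained by differentiating $\rho(\bm{k})\rho(-\bm{k})$ in $\bm{r}_i$ and using $\rho(-\bm{k})=\overline{\rho(\bm{k})}$; then the same cancellation gives $\mathbb{E}\bm{Y}_i=-\frac{q_i}{V}\sum_{\bm{k}}\bm{k}\,\widehat{\mathcal{F}}_{h}^{t_0}(\bm{k})\operatorname{Im}\!\left(e^{-i\bm{k}\cdot\bm{r}_i}\rho(\bm{k})\right)=\bm{F}_{\mathcal{F}}(\bm{r}_i)$ by Eq.~\eqref{eq::3.8}. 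Averaging over the $P$ i.i.d.\ draws preserves the mean, so $\mathbb{E}\Xi_U=0$ and $\mathbb{E}\bm{\Xi}_{\bm{F},i}=\bm 0$ as in Eq.~\eqref{eq::3.19}.

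For the variances I would use the standard i.i.d.\ reduction $\mathbb{E}\bigl|\frac1P\sum_\ell Z_\ell-\mathbb{E}Z\bigr|^2=\frac1P\bigl(\mathbb{E}|Z|^2-|\mathbb{E}Z|^2\bigr)$, the cross terms vanishing by independence; for the vector-valued force this is applied componentwise and summed, and the unbiasedness $\mathbb{E}\bm{Y}_i=\bm{F}_{\mathcal{F}}(\bm{r}_i)$ collapses $-2\mathbb{E}\bm{Y}_i\cdot\bm{F}_{\mathcal{F}}(\bm{r}_i)+|\bm{F}_{\mathcal{F}}(\bm{r}_i)|^2$ to $-|\bm{F}_{\mathcal{F}}(\bm{r}_i)|^2$. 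The second moment of $X$ is again a direct substitution: one power of $\mathscr{F}(\bm{k})$ and one power of $S$ cancel, yielding $\mathbb{E}X^2=\frac{S}{4V^2}\sum_{\bm{k}}\widehat{\mathcal{F}}_{h}^{t_0}(\bm{k})|\rho(\bm{k})|^4/\mathscr{F}(\bm{k})$; inserting $\mathscr{F}(\bm{k})=\langle|\rho(\bm{k})|^2\rangle$ (Eq.~\eqref{eq::3.15}), subtracting $|U_{\mathcal{F}}|^2$, and dividing by $P$ gives Eq.~\eqref{eq::3.22}. For the force, plugging the gradient identity into $\bm{Y}_i$ and taking $|\cdot|^2$ produces the factor $k^2\bigl|\operatorname{Im}(e^{-i\bm{k}\cdot\bm{r}_i}\rho(\bm{k}))\bigr|^2$, and the same cancellation gives $\mathbb{E}|\bm{Y}_i|^2=\frac{q_i^2 S}{V^2}\sum_{\bm{k}}k^2\widehat{\mathcal{F}}_{h}^{t_0}(\bm{k})\bigl|\operatorname{Im}(e^{-i\bm{k}\cdot\bm{r}_i}\rho(\bm{k}))\bigr|^2/\mathscr{F}(\bm{k})$; subtracting $|\bm{F}_{\mathcal{F}}(\bm{r}_i)|^2$ and dividing by $P$ yields Eq.~\eqref{eq::3.24}.

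The computation is elementary once this framework is in place, so I do not anticipate a genuine obstacle. The only points requiring care are the gradient identity for $|\rho(\bm{k})|^2$ together with the bookkeeping of real and imaginary parts, the vector bias--variance decomposition for the force, and — the main practical risk — tracking the normalization constants ($S$, the powers of $V$, and the exact placement of $\mathscr{F}$) so that Eqs.~\eqref{eq::3.22}--\eqref{eq::3.24} emerge with precisely the stated factors.
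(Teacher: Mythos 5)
Your proof is correct and is precisely the ``direct calculation'' that the paper invokes for Lemma~\ref{thm::unbia} without writing out: the i.i.d.\ single-sample reduction, the cancellation of $\mathscr{F}(\bm{k})$ and $S$ against the sampling weight $\mathscr{P}(\bm{k})=\mathscr{F}(\bm{k})\widehat{\mathcal{F}}_{h}^{t_0}(\bm{k})/S$, and the gradient identity $\nabla_{\bm{r}_i}|\rho(\bm{k})|^2=2q_i\bm{k}\operatorname{Im}\left(e^{-i\bm{k}\cdot\bm{r}_i}\rho(\bm{k})\right)$ all check out, and the bias--variance decomposition for the vector-valued force is handled correctly. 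The one bookkeeping point you flagged does surface: the denominators $\langle|\rho(\bm{k})|^2\rangle/V$ in Eqs.~\eqref{eq::3.22}--\eqref{eq::3.24} show that the paper effectively uses the intensive convention $\mathscr{F}(\bm{k})=\langle|\rho(\bm{k})|^2\rangle/V$ with $S=V^{-1}\sum_{\bm{k}}\langle|\rho(\bm{k})|^2\rangle\widehat{\mathcal{F}}_{h}^{t_0}(\bm{k})$ (as used in the proof of Theorem~\ref{thm::3.2}), whereas you substitute $\mathscr{F}(\bm{k})=\langle|\rho(\bm{k})|^2\rangle$ literally from Eq.~\eqref{eq::3.15}; since the estimators in Eq.~\eqref{eq::UF} and the variance expressions are invariant under the simultaneous rescaling $\mathscr{F}\to c\mathscr{F}$, $S\to cS$, your result and the stated formulas are the same quantity written in two conventions, not a genuine discrepancy.
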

\noindent Lemma~\ref{thm::unbia} guarantees the consistency of stochastic approximations, i.e. $\mathbb{E}U_{\mathcal{F}}^{\mathscr{P},*}=U_{\mathcal{F}}$ and $\mathbb{E}\bm{F}_{\mathcal{F}}^{\mathscr{P},*}(\bm{r}_i)=\bm{F}_{\mathcal{F}}(\bm{r}_i)$. Eqs.~\eqref{eq::3.22}-\eqref{eq::3.24} illustrate that the variances of both energy and force scale as $O(1/P)$. More precisely, one has Theorem~\ref{thm::3.2} which holds under the mean-field assumption~\cite{hansen2013theory}.
\begin{theorem}\label{thm::3.2}
Let $\rho_r=N/V$ be the number density.~Under the mean-field assumption that particles are uniformly distributed without correlation, the adaptive importance sampling strategy defined by Eqs.~\eqref{eq::3.124}-\eqref{eq3::3.22} is ``optimal'' in sense of $\mathbb{E}|\Xi_{U}|^2\sim0$. Furthermore, the variance of force $\mathbb{E}|\bm{\Xi}_{i}|^2$ scales as $O(1/P)$, and is independent of both $N$ and the truncated number of Gaussians $M$. 

\end{theorem}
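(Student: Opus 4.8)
The plan is to use Lemma~\ref{thm::unbia} as the starting point and to evaluate the two variance expressions \eqref{eq::3.22} and \eqref{eq::3.24} under the mean-field assumption, whereby $\langle|\rho(\bm{k})|^2\rangle$ is replaced by its Debye-H\"uckel approximation \eqref{eq3::3.22} (truncated to the leading term). First I would handle the energy fluctuation: substitute the definition \eqref{eq::3.15} of $\mathscr{F}(\bm{k})$ into \eqref{eq::3.22}, and observe that the summand becomes $\widehat{\mathcal{F}}_h^{t_0}(\bm{k})\,|\rho(\bm{k})|^4/\langle|\rho(\bm{k})|^2\rangle$. Under the mean-field hypothesis that particles are uncorrelated and uniformly distributed, the configurational average of $|\rho(\bm{k})|^4$ factorizes (Wick-type / Gaussian closure) into a multiple of $\langle|\rho(\bm{k})|^2\rangle^2$, so that $\langle|\rho(\bm{k})|^4\rangle/\langle|\rho(\bm{k})|^2\rangle \approx \langle|\rho(\bm{k})|^2\rangle$ up to a bounded constant. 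Plugging this back into the bracket in \eqref{eq::3.22}, the first term reduces (again using \eqref{eq::3.5} and the definition of $S=\sum_{\bm{k}}\mathscr{F}(\bm{k})\widehat{\mathcal{F}}_h^{t_0}(\bm{k})$) to something that cancels $|U_{\mathcal{F}}|^2$ at leading order, giving $\mathbb{E}|\Xi_U|^2\sim 0$ — this is the precise sense in which the sampling is ``optimal''. The key identity to make this clean is that $\mathscr{P}(\bm{k})\propto \langle|\rho(\bm{k})|^2\rangle\widehat{\mathcal{F}}_h^{t_0}(\bm{k})$ is, under the mean-field closure, proportional to the ``ideal'' importance $|\rho(\bm{k})|^2\widehat{\mathcal{F}}_h^{t_0}(\bm{k})$ whose associated estimator has vanishing variance.

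Next I would treat the force variance. Starting from \eqref{eq::3.24}, I note that $|\operatorname{Im}(e^{-i\bm{k}\cdot\bm{r}_i}\rho(\bm{k}))|^2$, after averaging over the uncorrelated positions of the other $N-1$ particles, contributes a term proportional to $\sum_{j\ne i} q_j^2 = O(N)$ (plus the self term); the cross terms vanish in expectation because distinct particle positions are independent and uniform on $\Omega$. Dividing by $\langle|\rho(\bm{k})|^2\rangle/V$ and recalling that $S = \sum_{\bm{k}}\langle|\rho(\bm{k})|^2\rangle\widehat{\mathcal{F}}_h^{t_0}(\bm{k})$ is itself $O(N)$ (since $\langle|\rho(\bm{k})|^2\rangle \sim \sum_i q_i^2$ at large $k$ where the SOG weight is concentrated), I expect the product $S \cdot (\text{per-mode contribution})/\langle|\rho|^2\rangle$ to be $O(1)$ in $N$: the factor $\sum_j q_j^2$ in the numerator is compensated by the $\langle|\rho(\bm{k})|^2\rangle$ in the denominator, so the $N$-dependence drops out. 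The remaining $\bm{k}$-sum $\sum_{\bm{k}} k^2 \widehat{\mathcal{F}}_h^{t_0}(\bm{k}) \cdot (\text{normalized factors})$ converges because $\widehat{\mathcal{F}}_h^{t_0}(\bm{k})$ is a finite sum of Gaussians in $k$, and — crucially — this sum is dominated by the narrowest Gaussian but its value is bounded uniformly in $M$ because adding more (wider, i.e. smaller-$t_m$) Gaussians only adds rapidly decaying tails; hence the independence of $M$. Combining, $\mathbb{E}|\bm{\Xi}_{\bm{F},i}|^2 = O(1/P)$ with an $M$- and $N$-independent prefactor, after subtracting the negligible $|\bm{F}_{\mathcal{F}}(\bm{r}_i)|^2 = O(1)$ term.

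The main obstacle I anticipate is making the mean-field factorization of $\langle|\rho(\bm{k})|^4\rangle$ rigorous enough to justify the $\sim 0$ claim for the energy variance: strictly, $\langle|\rho|^4\rangle = 2\langle|\rho|^2\rangle^2 - (\text{diagonal corrections}) + (\text{connected 4-point part})$, and one must argue the connected part and the diagonal $O(\sum q_i^4)$ corrections are subleading relative to the leading cancellation. I would handle this by stating the closure explicitly as part of the ``mean-field assumption'' in the hypothesis (particles i.i.d.\ uniform on $\Omega$), computing $\langle|\rho(\bm{k})|^2\rangle$ and $\langle|\rho(\bm{k})|^4\rangle$ exactly for that distribution, and showing the leading terms cancel while the residual is $O(1/N)$ smaller — so that ``$\sim 0$'' is interpreted asymptotically in the same mean-field/thermodynamic limit already invoked for Eq.~\eqref{eq::asy} and Eq.~\eqref{eq3::3.22}. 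A secondary technical point is verifying the $M$-independence of the convergent $\bm{k}$-sums; this follows from the explicit Gaussian form of $\widehat{\mathcal{F}}_h^{t_0}$ together with the parameter estimates \eqref{eq::2.33}--\eqref{eq::2.34} of Theorem~\ref{thm::1.1}, which pin the narrowest bandwidth $C_{M_2}\sim r_c$ independently of how many further Gaussians are retained. I would keep the write-up at the level of these structural identities and asymptotic orderings rather than carrying out every configurational integral in full.
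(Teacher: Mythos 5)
Your skeleton — start from Lemma~\ref{thm::unbia}, insert the mean-field structure factor, and control the residual $\bm{k}$-sums via the explicit Gaussian form of $\widehat{\mathcal{F}}_h^{t_0}$ (so that $\sum_m e^{t_m/2}$ and $\sum_m e^{3t_m/2}$ are $O(1)$ constants, giving $M$-independence) — is the paper's route. But both of your central quantitative steps have gaps. For the energy: if you really compute $\langle|\rho(\bm{k})|^4\rangle$ for i.i.d.\ uniform particles you get the Wick value $\approx 2\langle|\rho(\bm{k})|^2\rangle^2$, so the first term in the bracket of Eq.~\eqref{eq::3.22} becomes $\approx 2|U_{\mathcal{F}}|^2$ and the bracket evaluates to $\approx |U_{\mathcal{F}}|^2$, not $0$. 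The extra disconnected pairing is \emph{the same order} as the term you hope to cancel, not ``$O(1/N)$ smaller'' as you assert, so your own closure disproves $\mathbb{E}|\Xi_U|^2\sim 0$. The paper uses a strictly stronger reading of the mean-field assumption: the cross terms in Eq.~\eqref{eq::3.20} vanish \emph{identically}, not merely in expectation, so $|\rho(\bm{k})|^2\equiv Nq^2=\langle|\rho(\bm{k})|^2\rangle$ is deterministic, $|\rho(\bm{k})|^4=\langle|\rho(\bm{k})|^2\rangle^2$ with no factor of $2$, the per-sample estimator $|\rho(\bm{k}_\ell)|^2/\mathscr{F}(\bm{k}_\ell)$ is literally constant, and the cancellation in Eq.~\eqref{eq::3.27} is exact. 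You must adopt that interpretation (or settle for a weaker conclusion).

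For the force: under the same i.i.d.\ hypothesis you correctly get $\langle|\operatorname{Im}(e^{-i\bm{k}\cdot\bm{r}_i}\rho(\bm{k}))|^2\rangle=\tfrac12\sum_{j\neq i}q_j^2=O(N)$, but then your bookkeeping does not close: with $S=O(N)$ and $\langle|\rho(\bm{k})|^2\rangle=O(N)$ in the denominator, the first term of Eq.~\eqref{eq::3.24} scales as $O(N)\cdot O(N)/O(N)=O(N)$, i.e.\ a variance of $O(N/P)$ rather than $O(1/P)$ — the $N$-dependence does not ``drop out.'' The paper's $N$-independence rests on the additional pointwise bound $|\operatorname{Im}(e^{-i\bm{k}\cdot\bm{r}_i}\rho(\bm{k}))|^2\le C$ with $C=O(1)$ (Eq.~\eqref{eq::3.29}, imported from \cite{Jin2020SISC}); this is a genuinely stronger, screening-type hypothesis than i.i.d.\ uniformity and is the ingredient missing from your argument. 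With that bound in hand, the single factor of $\langle|\rho(\bm{k}^*)|^2\rangle$ inside $S$ cancels the single $\langle|\rho(\bm{k})|^2\rangle$ in the denominator, and the remaining $\frac{1}{V^2}\sum_{\bm{k}^*}\widehat{\mathcal{F}}_h^{t_0}(\bm{k}^*)\sum_{\bm{k}}k^2\widehat{\mathcal{F}}_h^{t_0}(\bm{k})=O(1)$ by the integral approximation Eq.~\eqref{eq::asy}, which is exactly how the paper concludes.
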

\begin{proof}
	By the definition of the structure factor, one has 
	\begin{equation}\label{eq::3.20}
		\left|\rho(\bm{k})\right|^2=\sum_{i=1}^{N}q_i^2+\sum_{\substack{i,j=1\\i\neq j}}^{N}q_iq_je^{i\bm{k}\cdot\bm{r}_{ij}}.
	\end{equation}
	The second term on the right-hand side vanishes for all $\bm{k}\neq\bm{0}$ under the mean-field assumption. Note that this assumption is widely used for error estimate of the Ewald summation~\cite{kolafa1992cutoff,deserno1998mesh}. Without loss of generality, we assume that $q_i\equiv q$ for all $i$, then the remainder term in Eq.~\eqref{eq::3.20} gives us $\left|\rho(\bm{k})\right|^2\equiv N^2q^2$ as $\bm{k}=\bm{0}$ and
	\begin{equation}\label{eq::rhok}
		\left|\rho(\bm{k})\right|^2\sim \left<\left|\rho(\bm{k})\right|^2\right>=Nq^2\quad \text{as}\quad \bm{k}\neq\bm{0}.
	\end{equation}
 Substituting this result and the normalization factor $S=V^{-1}\sum_{\bm{k}}\langle\left|\rho(\bm{k})\right|^2\rangle\widehat{\mathcal{F}}_{h}^{t_0}(\bm{k})$	into Eq.~\eqref{eq::3.22} yields
\begin{equation}\label{eq::3.27}
		\begin{split}
			\mathbb{E}|\Xi_{U}|^2&\sim \dfrac{1}{4PV^2}\left[\sum_{\bm{k}^*}\widehat{\mathcal{F}}_{h}^{t_0}(\bm{k}^*)\langle\left|\rho(\bm{k}^*)\right|^2\rangle\sum_{\bm{k}}\frac{\widehat{\mathcal{F}}_{h}^{t_0}(\bm{k})\langle\left| \rho(\boldsymbol{k})\right|^4\rangle}{\langle\left|\rho(\bm{k})\right|^2\rangle}-\left|\sum_{\bm{k}}\widehat{\mathcal{F}}_{h}^{t_0}(\bm{k})\langle\left| \rho(\boldsymbol{k})\right|^2\rangle\right|^2\right]=0.
		\end{split}
	\end{equation}
	
Next we consider the variance of force. By the mean-field theory, the term of the structure factor in Eq.~\eqref{eq::3.24} is bounded by a constant $C$ as $\bm{k}\neq\bm{0}$~\cite{Jin2020SISC},
\begin{equation}\label{eq::3.29}
	\left|\operatorname{Im}\left(e^{-i \boldsymbol{k} \cdot \boldsymbol{r}_i} \rho(\boldsymbol{k})\right)\right|^2\leq C.
\end{equation}
By this inequality and Eq.~\eqref{eq::rhok}, one has 
\begin{equation}
	\begin{split}
	\mathbb{E}|\bm{\Xi}_{F,i}|^2&\leq \dfrac{1}{P}\frac{Cq^2}{V^2}\sum_{\bm{k}^*}\widehat{\mathcal{F}}_{h}^{t_0}(\bm{k}^*)\langle\left|\rho(\bm{k}^*)\right|^2\rangle\sum_{\bm{k}}\dfrac{k^2\widehat{\mathcal{F}}_{h}^{t_0}(\bm{k})}{\langle\left|\rho(\bm{k})\right|^2\rangle}\\
	&= \dfrac{1}{P}\frac{Cq^2}{V^2}\sum_{\bm{k}^*}\widehat{\mathcal{F}}_{h}^{t_0}(\bm{k}^*)\sum_{\bm{k}}k^2\widehat{\mathcal{F}}_{h}^{t_0}(\bm{k}).
	\end{split}
\end{equation}	
One employs the integral approximation as per Eq.~\eqref{eq::asy}:
\begin{equation}
	\begin{split}
		\sum_{\bm{k}}\widehat{\mathcal{F}}_{h}^{t_0}(\bm{k})&\simeq \frac{Vh}{2\pi}\sum_{m=-M_1}^{M_2}e^{-t_m}\int_{0}^{\infty} k^2e^{-e^{-t_m}k^2/4}dk=\frac{Vh}{\sqrt{\pi}}\sum_{m=-M_1}^{M_2}e^{t_m/2}.
	\end{split}
\end{equation}
Similarly, one has 
\begin{equation}
	\sum_{\bm{k}}k^2\widehat{\mathcal{F}}_{h}^{t_0}(\bm{k})\simeq \frac{6Vh}{\sqrt{\pi}}\sum_{m=-M_1}^{M_2}e^{3t_m/2}.
\end{equation}
Since both $\sum\limits_{m=-M_1}^{M_2}e^{t_m/2}$ and $\sum\limits_{m=-M_1}^{M_2}e^{3t_m/2}$ are bounded by a constant $C_1$, one obtains
\begin{equation}
	\mathbb{E}|\bm{\Xi}_{F,i}|^2\leq \frac{CC_1h^2}{\pi P}=O\left(\frac{1}{P}\right)
\end{equation}
which is independent of $N$ and $M$. 

\end{proof}

Theorem~\ref{thm::3.2} suggests that the adaptive importance sampling strategy is near-optimal in context of the mean-field assumption for arbitrary kernel parameters. 
Since the distribution could not be strictly isotropic at each step, one only expects this strategy to maintain a relatively optimal level of variance in long-term simulations. In practice, Eq.~\eqref{eq3::3.22} is used to approximate $\langle|\rho(\bm{k})|^2\rangle$. In this case, one can also demonstrate $\mathbb{E}|\bm{\Xi}_{F,i}|^2\sim O(1/P)$, and the proof is provided in \ref{sec::energy}.

By Eqs.~\eqref{eq::3.22} and \eqref{eq::3.24}, the deviations in the random approximation are of $O(1)$, meaning that it cannot guarantee any digits of accuracy for each step. At first glance, this seems unacceptable. However, the rationale behind random batch-type methods is that as the MD evolution progresses, the random approximations accumulate over time, and Lemma~\ref{thm::unbia} and Theorem~\ref{thm::3.2} show that the averaged effect is correct. Hence, our method works due to this time-averaging effect, which can be regarded as the law of large numbers over time. Suppose $\Delta t$ is the time step. When integrating with proper thermostats and barostats, such as Langevin dynamics~\cite{Frenkel2001Understanding}, strong convergence and geometric ergodicity have been established rigorously~\cite{jin2020random,jin2022random,jin2023ergodicity} under some regular assumptions. The error estimates are expressed as:
\begin{equation}
\sup_{t \in[0, T]} \sqrt{\frac{1}{N} \sum_{i=1}^N \mathbb{E}\left(\left|\boldsymbol{r}_i-\boldsymbol{r}_i^*\right|^2+\left|\boldsymbol{p}_i-\boldsymbol{p}_i^*\right|^2\right)} \lesssim \sqrt{\Lambda \Delta t},
\end{equation}
where $\bm{p}_i$ denotes the momentum, $(\bm{r}_i^*,\bm{p}_i^*)$ represent conjugate variables in random batch-based dynamics, and $\Lambda$ is the upper bound for the variance. By Theorem~\ref{thm::3.2}, one has $\Lambda\sim O(1/P)$. While simulating the microcanonical (NVE) ensemble, random batch-type methods can be integrated with an additional weak-coupled bath on the Newtonian dynamics to maintain energy stability~\cite{Liang2024Energy}. For further discussions on other types of baths, we refer the reader to~\cite{Jin2020SISC,Liang2023SISC}.

Finally, we analyze the complexity of the proposed RBSOG method at each time step. By introducing a neighbor list in calculations, the complexity of the near-field calculations is $O(N)$. And by the adaptive random batch importance sampling strategy, the cost of Fourier space calculations is $O(PN)$. This implies that the RBSOG method has linear complexity per time step if one chooses $P=O(1)$.

\section{Numerical Examples}\label{sec::numexa}
In this section, we provide several numerical examples to examine the accuracy and performance of the RBSOG method. The SOG decomposition uses the parameters listed in the second row of Table~\ref{tab::1.1} such that the error is at the level of $10^{-4}$. Our code is developed based on a modification of the LAMMPS software~\cite{thompson2022lammps} (version 21Nov2023). All the calculations were performed on the ``Siyuan Mark-I'' cluster at Shanghai Jiao Tong University, which comprises $2~\times$  Intel Xeon ICX Platinum 8358 CPU ($2.6$ GHz, $32$ cores) and $512$ GB memory per node. 

\subsection{Accuracy for YOCP systems}
In YOCP systems, the static and dynamics properties are characterized by an effective coupling parameter
\begin{equation}
\Gamma^*=\widetilde{\Gamma}(1+\kappa+\kappa^2)e^{-\kappa},
\end{equation} 
where $\widetilde{\Gamma}=Q^2/(ak_{\text{B}}T)$ is the coupling parameter, and $\kappa=a/{\lambda}$ is the screening parameter. Here, $a=(4\pi n_c)^{-1/3}$ represents the average interparticle distance, with $n_c=N/V$ being the charge number density. We conduct MD simulations on a YOCP system with $20000$ particles of charge $Q=1e$ in a cubic box of side length $20~\mathring{A}$. The RBSOG method is compared with our self-implemented PPPM method for the Yukawa potential in the LAMMPS, using the framework and parameter selection scheme outlined in~\cite{dharuman2017generalized}. This implementation closely follows the established PPPM method for the pure Coulomb kernel in the LAMMPS~\cite{thompson2022lammps}, ensuring comparable performance and scalability. The time step of both methods is $\Delta t=10^{-3}$ $fs$. The simulations start with $10^5$ equilibration steps in the NVT ensemble with Nos\'e-Hoover dynamics~\cite{Frenkel2001Understanding}, followed by $5\times 10^5$ production steps in the NVE ensemble. In NVE simulations, the PPPM and the RBSOG are integrated with the sympletic velocity-Verlet method~\cite{Frenkel2001Understanding} and the weakly-coupled scheme~\cite{Liang2024Energy}, respectively. The screening length is $\lambda=0.5773~\mathring{A}$, and the real-space cutoff is set as $r_c=2.8865~\mathring{A}$ for both the RBSOG and the PPPM methods. The estimated relative error level for the PPPM is set to $10^{-4}$, consistent with the estimated level of the SOG decomposition. 

\begin{figure}[!ht]
	\centering
\includegraphics[width=0.88\linewidth]{./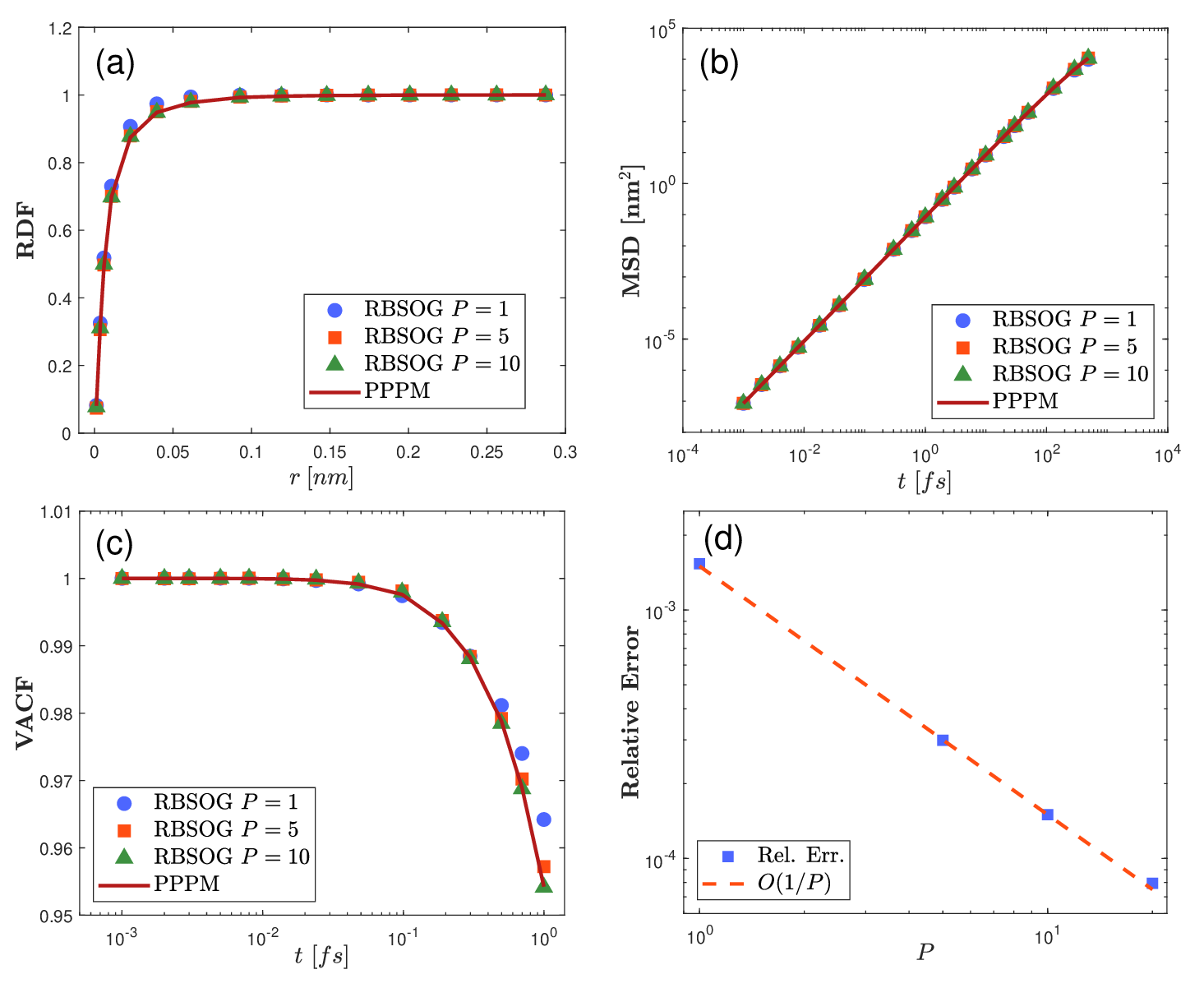}
	\caption{The RDF (a), MSD (b), VACF (c) and the relative error on the ensemble average of total energy (d) of a weak coupling system ($\Gamma^*=0.1$). The simulation results use the RBSOG method with different batch sizes $P=1$, $5$ and $10$, compared to the PPPM with $10^{-4}$ accuracy. }
	\label{fig:01}
\end{figure}

We measure accuracy using four physical quantities: the radial distribution function (RDF), mean square displacement (MSD), velocity auto-correlation function (VACF), and relative error in the ensemble average of total energy. Figures~\ref{fig:01} and~\ref{fig:02} present the simulation results for weak ($\Gamma^* = 0.1$) and strong ($\Gamma^* = 100$) coupling systems, respectively. The results indicate that the RBSOG method with $P = 10$ and $P = 20$ yields statistically identical RDFs, MSDs, and VACFs for $\Gamma^* = 0.1$ and $\Gamma^* = 100$ cases when compared to the PPPM method. The convergence of the total energy in Figures~\ref{fig:01}(d) and~\ref{fig:02}(d) shows an $O(1/P)$ rate, in agreement with the theoretical estimates.

The static structure factor (SSF) $S(k)$ is a critical quantity that characterizes the average structural information of the system~\cite{Allen2017ComputerLiquids}, and is defined as
\begin{equation}
S(k)=\frac{1}{N}\left\langle\rho_{\text{static}}(\bm{k})\rho_{\text{static}}(-\bm{k})\right\rangle\quad\,\hbox{with}\quad\,\rho_{\text{static}}(\bm{k}) = \sum\limits_{j=1}^{N} e^{-i\bm{k}\cdot\bm{r}_j},
\end{equation}
where $\rho_{\text{static}}(\bm{k})$ represents the Fourier transform of particle distributions. The results in Figures~\ref{fig:01} and~\ref{fig:02} suggest that a batch size of \( P = 20 \) is likely sufficient for handling YOCP systems across all coupling factors. To verify this, we conduct simulations to calculate the RDF, MSD, VACF, and SSF for \(\Gamma^*\) ranging from 0.1 to 100, displayed in Figure~\ref{fig:03}. The RBSOG method with \( P = 20 \) demonstrates good agreement with the reference results, indicating its accuracy across the entire range of \(\Gamma^*\).

\begin{figure}[!ht]
	\centering
\includegraphics[width=0.88\linewidth]{./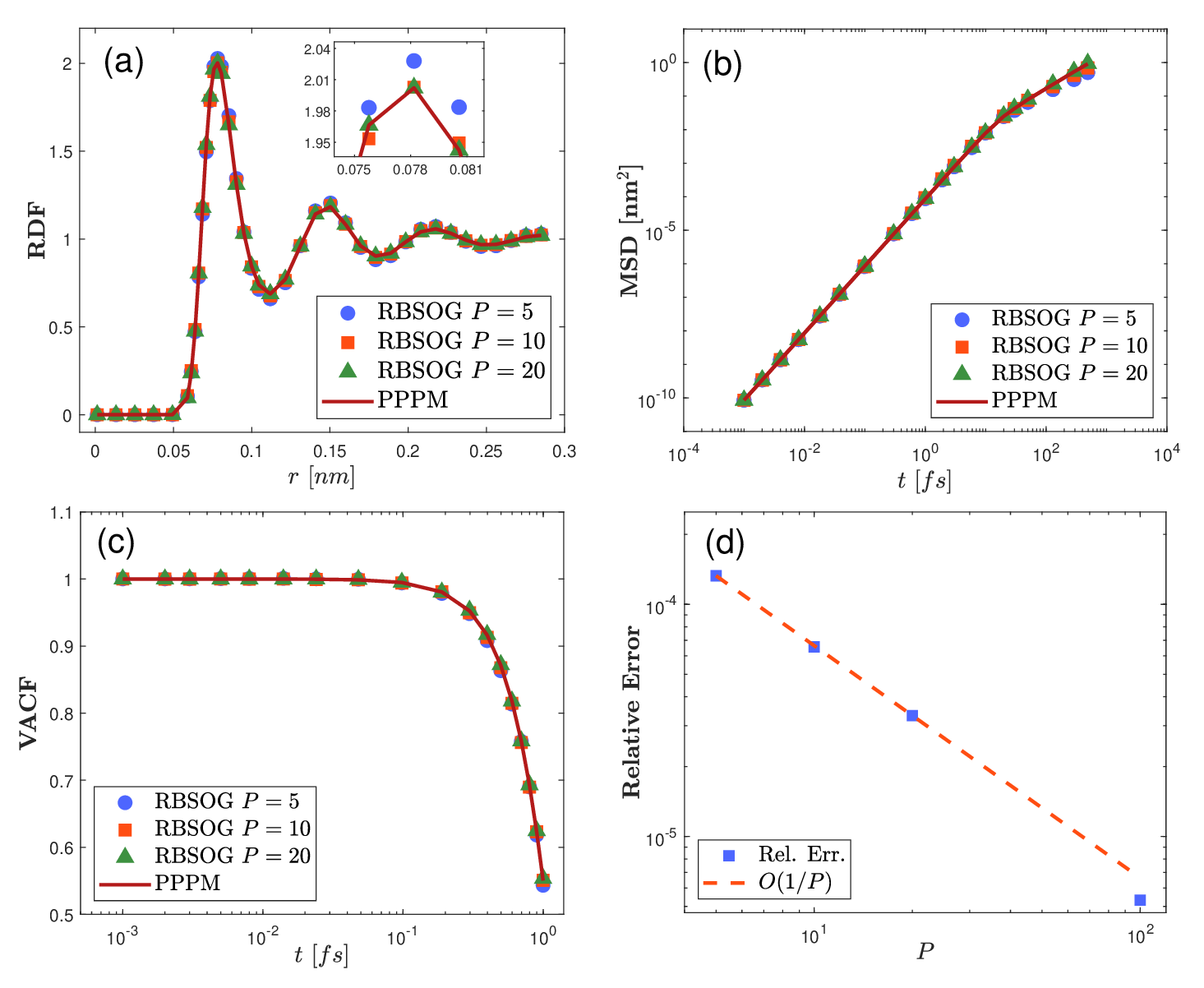}
	\caption{The RDF (a), MSD (b), VACF (c) and the relative error on the ensemble average of total energy (d) of a strong coupling system ($\Gamma^*=100$). The simulation results use the RBSOG method with different batch sizes $P=5$, $10$ and $20$, compared to the PPPM with $10^{-4}$ accuracy.}
	\label{fig:02}
\end{figure}

\begin{figure}[!ht]
\centering
\includegraphics[width=0.88\linewidth]{./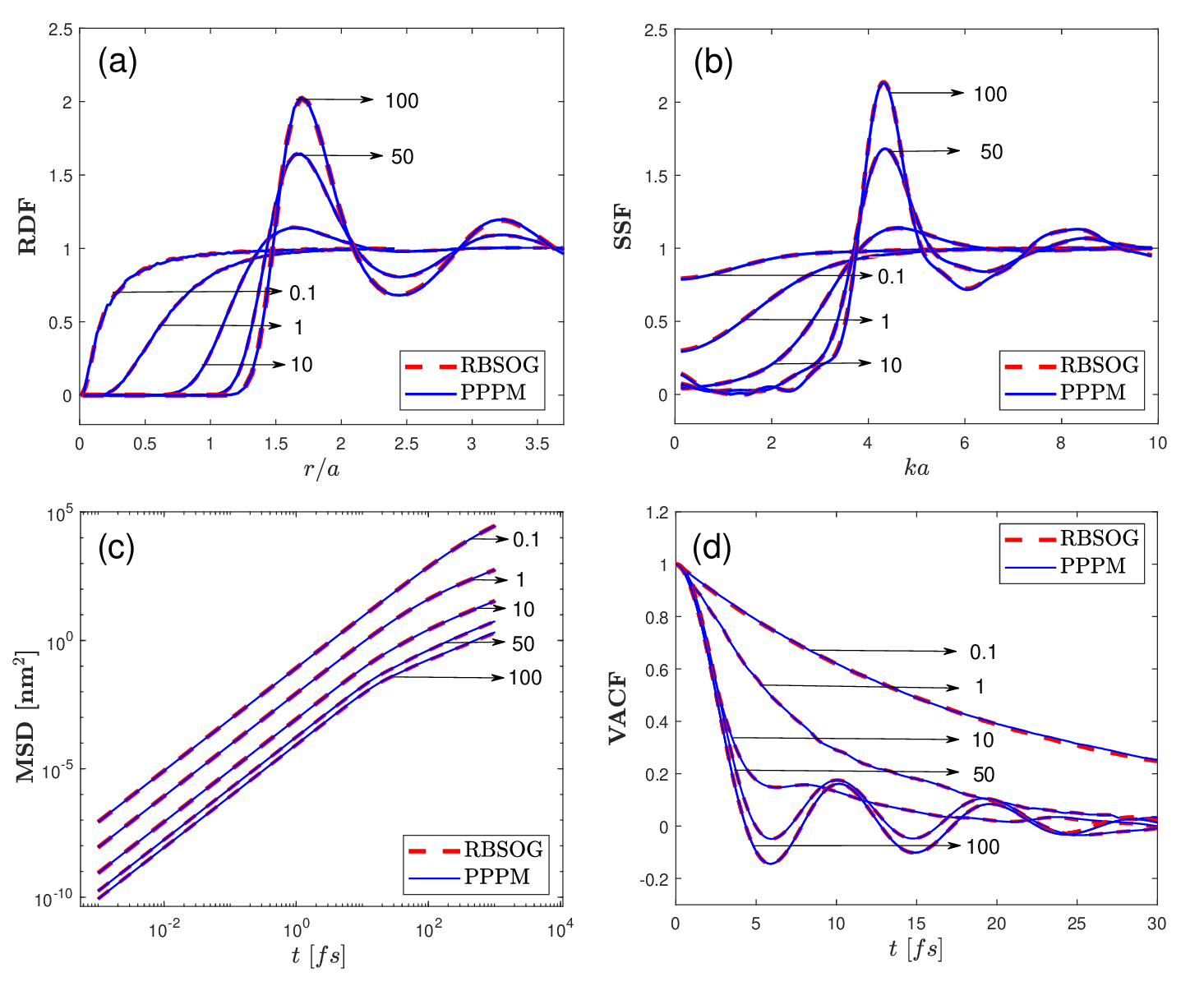}
\caption{The RDF (a), SSF (b), MSD (c) and VACF (d) produced by the PPPM (blue solid lines) and the RBSOG method (red dashed lines) with $P=20$. Data are shown for various effective coupling parameter $\Gamma^*$, ranging from weak ($\Gamma^*=0.1$) to strong ($\Gamma^*=100$) coupling regimes. }
	\label{fig:03}
\end{figure}

\subsection{Wall-clock time performance}
We now investigate the CPU performance by comparing the results of the RBSOG, PPPM and PVFMM. For the RBSOG and PPPM, we use our LAMMPS implementations. For the PVFMM, we use the open source libraries~\cite{Malhotra2016PVFMM}, where the Scientific computing template library (SCTL)~\cite{Malhotra2022SCTL} is used for the SIMD accelerated kernel evaluation. 
It is worth noting that, although some periodic FMMs have been developed~\cite{jiang2023jcp,yan2018flexibly}, to the best of our knowledge, no open-source software currently supports fully periodic 3D Yukawa systems (though some of them claim kernel-independent in the formulation). Consequently, we compare our method with the PVFMM where the periodic tiling is direct truncated. To access a fair comparison, the estimated relative force error and real-space cutoff are set as $10^{-4}$ and $2.8865~\mathring{A}$ for the RBSOG and PPPM, respectively, where the real space cost is roughly identical for both methods. For the PVFMM, the multipole expansion order is set to $5$ and the maximum number of points in a leaf node is set to $50$ for $\varepsilon=10^{-4}$, and the periodic tiling is truncated at the same accuracy. The main goal of such parameter choice is for solely comparing the improvement of the RBSOG in Fourier space. We expect a fine tuning of parameters such as the cutoff $r_c$ and the batch size $P$ to balance the cost of the RBSOG in real and Fourier spaces can further optimize its efficiency in practice. All the simulations were conducted for $10000$ steps to estimate the CPU time per step. 

We first present the computational complexity of the proposed RBSOG method. In Figure~\ref{fig:07}(a), 512 cores are used for simulating strongly coupled YOCP systems with the screening length $\lambda=0.5773~\mathring{A}$, $\Gamma^*=100$ and number density $ 2.5~\mathring{A}^{-3}$, and the computational time per step is shown for system sizes up to $N=1.28\times 10^6$. The dashed lines indicate linear fitting. The results demonstrate the \(O(N)\) scaling of the RBSOG method. The initial few data points of the Fourier space component do not scale linearly due to the small average number of particles per core and the dominance of communication costs. Next, we examine the memory usage with an increasing number of CPU cores. Figure~\ref{fig:07}(b) shows memory allocation per MPI rank while simulating the same YOCP system above. Compared to the PPPM and PVFMM results, the RBSOG method significantly reduces memory usage by about $40\%$ due to its tree-free and mesh-free nature. These findings highlight the potential for broader applications of the RBSOG method in large-scale simulations.

\begin{figure}[!ht]
	\centering
\includegraphics[width=0.87\linewidth]{./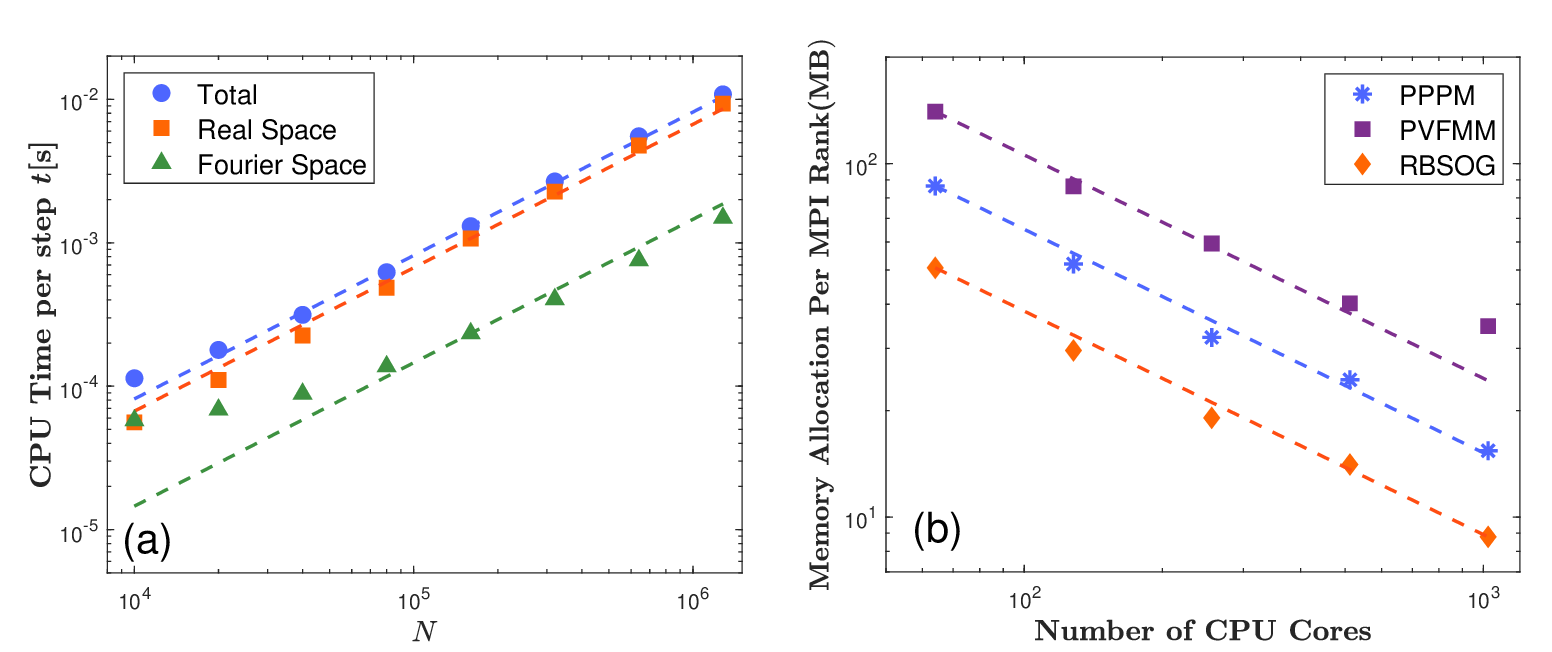}
	\caption{(a) CPU time per step for the RBSOG method with increasing $N$ and (b) memory allocation per MPI rank for the RBSOG, PPPM and PVFMM methods with increasing number of CPU cores. In (a), data are shown for the real space and Fourier space parts, as well as the total (i.e., real~+~Fourier) CPU cost. The dashed lines in (a-b) show the linear fitting of data.}
	\label{fig:07}
\end{figure}

Figure~\ref{fig:04}(a-b) presents the results for CPU time per step and scalability in strong scaling tests. Strong scaling measures parallel performance as the number of CPU cores increases while keeping the system size fixed. For these tests, we use a YOCP system with $1.28 \times 10^6$ particles and a side length of $108.57~\mathring{A}$. When a small number of cores is used, the RBSOG method outperforms the PPPM and PVFMM by a factor of $2-3$. As the number of cores increases, the RBSOG achieves an order of magnitude improvement over the other two methods when $\mathcal{C}\geq100$ CPU cores are used. Moreover, the RBSOG maintains nearly $95\%$ parallel scalability even with $1024$ CPU cores. Note that the PVFMM also has good scalability when $\mathcal{C}\leq 64$ and outperforms the PPPM throughout the tests. Although the time and memory cost of the PVFMM are not advantageous in our tests, we conjecture that they could be improved if combined with an appropriate periodization method. 

The results for CPU time per step and scalability in weak scaling tests are presented in Figure~\ref{fig:04}(c-d). Weak scaling measures how the solution time changes with the number of processors while maintaining a fixed average number of particles per processor. We conduct tests using up to $1024$ cores for YOCP systems with same number density as in the strong scaling tests. The RBSOG achieves near-perfect weak scaling, even with a relatively small average of $2000$ particles per core. In contrast, the strong and weak scaling of the PPPM and the PVFMM drop to about $10\%$ and $25\%$, respectively, when $1024$ cores are used. These results demonstrate the promising parallel efficiency of the RBSOG method. 

\begin{remark}
More recently, the dual-space multilevel kernel-splitting (DMK) framework~\cite{jiang2024CPAM} has emerged, showing promise for greater improvements as an alternative to both the FMM and fast Ewald summation. However, its current implementation is restricted to a serial Fortran version and does not support periodic boundary conditions. As the single-core efficiency of the DMK is comparable to that of the PVFMM, we limit our comparison to the PVFMM in this study.  
\end{remark}

\begin{figure}[!ht]
	\centering
\includegraphics[width=0.92\linewidth]{./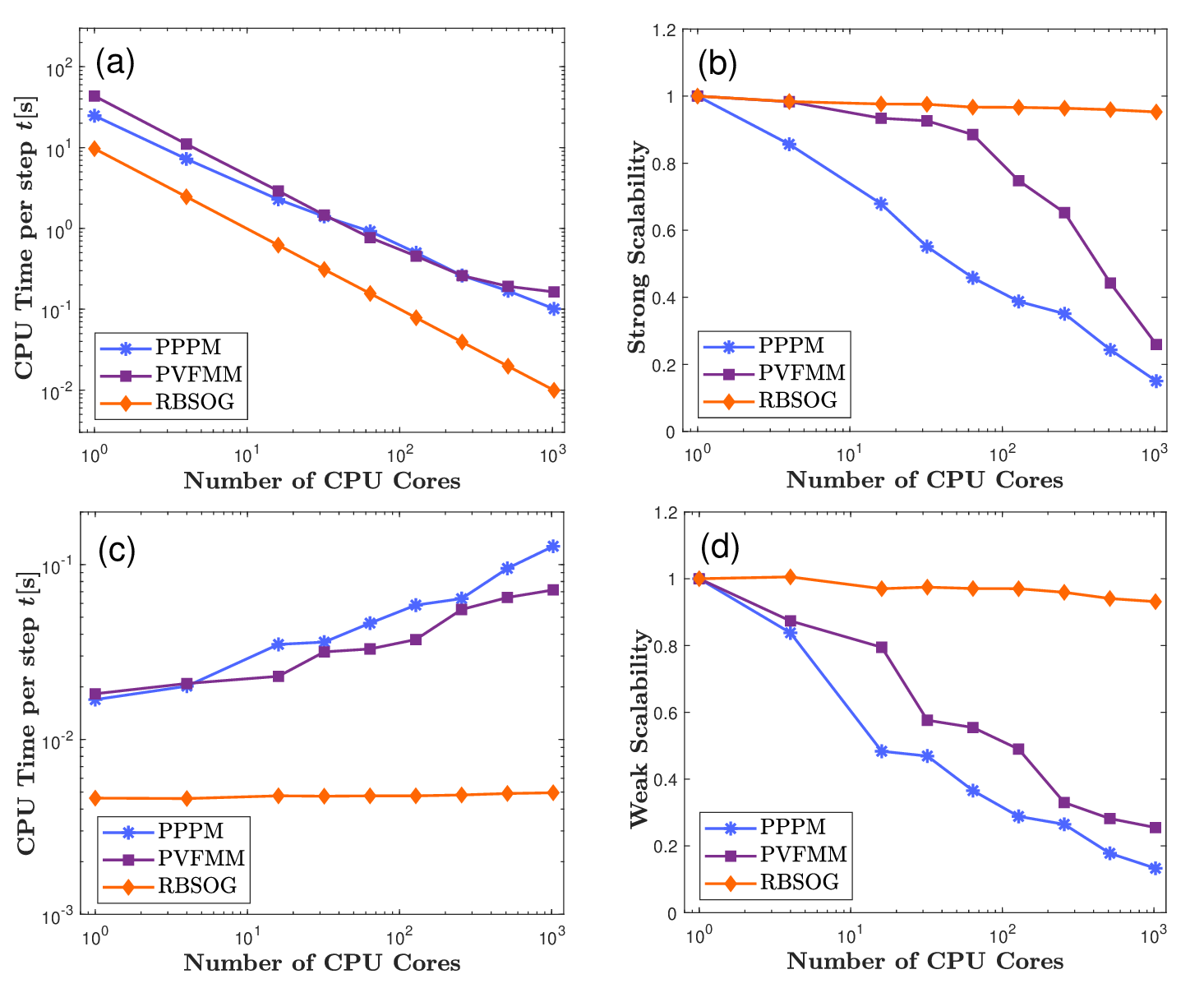}
	\caption{CPU time and strong/weak scalability are compared between the RBSOG, PPPM and PVFMM using up to \(1024\) CPU cores. (a-b) present the results for a fixed $N= 1.28 \times 10^6$ with the increase of CPU cores. (c-d) present the results with an average of $2000$ particles per core.}
	\label{fig:04}
\end{figure}

\begin{figure}[!ht]
\centering	
\includegraphics[width=0.87\linewidth]{./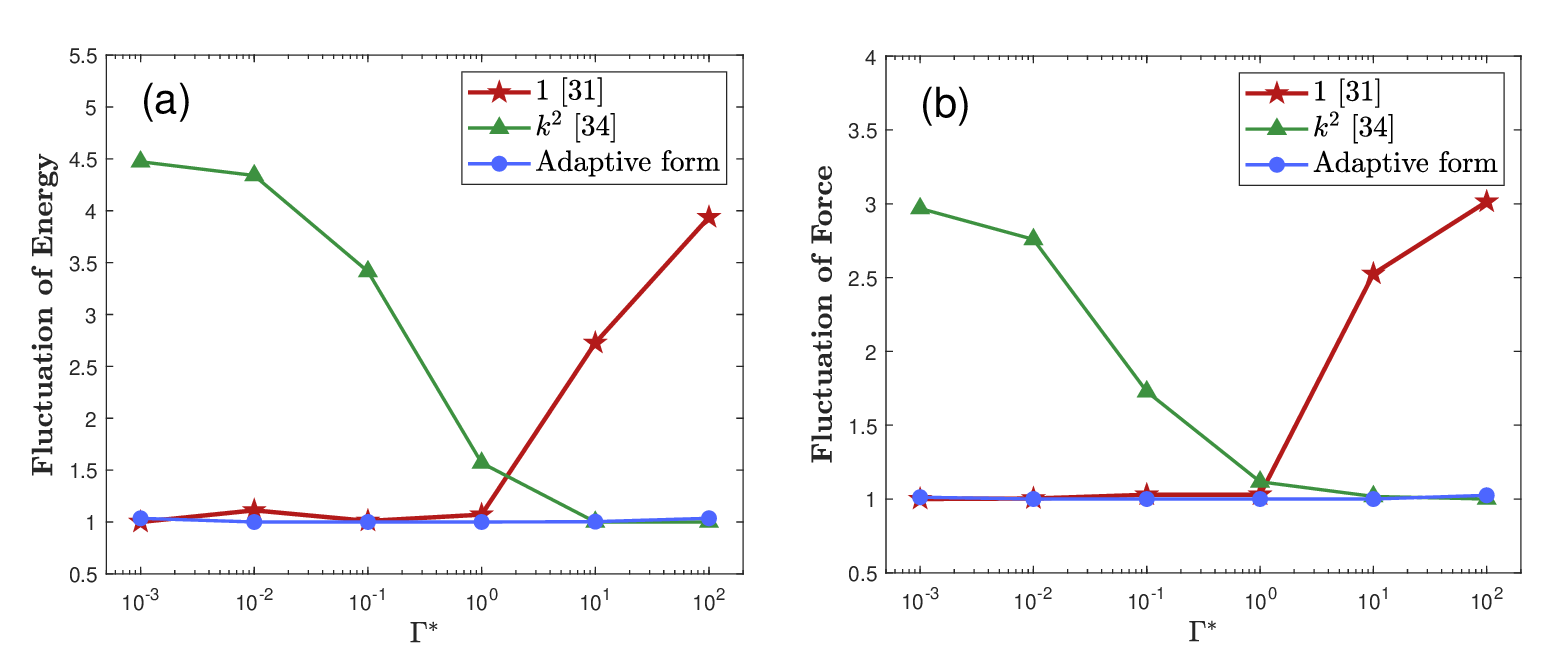}
\caption{The average fluctuations in (a) energy and (b) force are analyzed across different effective coupling factors $\Gamma^*$ during a $10^5$ time step simulation of YOCP systems. To better observe relative levels, we normalized the lowest fluctuation to $1$. The simulations use the RBSOG method with varying corrections $\mathscr{F}(\bm{k})$ for the sampling measure in Eq.~\eqref{eq::3.124}, with a fixed batch size of $P=20$.}
\label{fig:05}
\end{figure}

\subsection{Variance reduction}\label{subsec::variancereduction}
In this section, we compare the adaptive importance sampling strategy from Section~\ref{subsec:important} with different strategies by examining the average energy and force fluctuations. These quantities, defined as 
\begin{equation}
E_{\text{fluc}}=\langle (E^*-E)^2\rangle\quad\text{and}\quad F_{\text{fluc}}=\left\langle \frac{1}{N}\sum_{i=1}^N\left(\bm{F}_i^*-\bm{F}_i\right)^2 \right\rangle,
\end{equation}
measure the variance of energy and forces in ensemble averages, where $E^*$ and $\bm{F}^*_i$ are the stochastic energy and force (exerted at $i$th particle) calculated by the RBSOG method. We conduct MD simulations for $10^5$ time steps on YOCP systems of $1000$ particles at various of effective coupling factors $\Gamma^*$, where the screening length and real-space cutoff are set as $\lambda=0.5773~\mathring{A}$ and $r_c=2.8865~\mathring{A}$, respectively. The batch size is fixed at $P=20$ for all cases.

In Figure~\ref{fig:05}, we present the results of $E_{\text{fluc}}$ and $F_{\text{fluc}}$, where we test three different correction choices for $\mathscr{F}(\bm{k})$ to approximate the energy and forces: $\mathscr{F}(\bm{k})\equiv 1$, $\mathscr{F}(\bm{k})=k^2$, and the adaptive version from Eqs.~\eqref{eq::3.15}-\eqref{eq3::3.22}. The first two options have been used in the $\lambda \rightarrow \infty$ (Coulomb) case for previous RBE~\cite{Jin2020SISC} and RBSOG~\cite{Liang2023SISC} methods. In Yukawa systems, we observed that variance fluctuations due to different sampling strategies vary with the effective coupling parameter $\Gamma^*$. For small $\Gamma^*$, the long-range correlations from the Yukawa potential are minimal, suggesting that not correcting the structure factor ($\mathscr{F}(\bm{k})\equiv 1$) is preferable. Conversely, for large $\Gamma^*$, increased long-range correlations cause $\langle|\rho(\bm{k})|^2\rangle$ to approach $k^2$ due to the Stillinger-Lovett condition~\cite{Lovett1968JCP}. Across a wide range of coupling strengths ($\Gamma^*$ from $0.001$ to $100$), our adaptive importance sampling strategy consistently achieves the lowest variance among the three methods, reducing variance by up to $3-4$ times compared to other strategies. This efficiency allows for using only $1/3$ to $1/4$ of the batch size to achieve comparable performance. This advantage is anticipated to be even more pronounced in non-equilibrium systems with $\Gamma^*$ varying during the whole simulations.

\begin{figure}[!ht]
\centering	
\includegraphics[width=0.87\linewidth]{./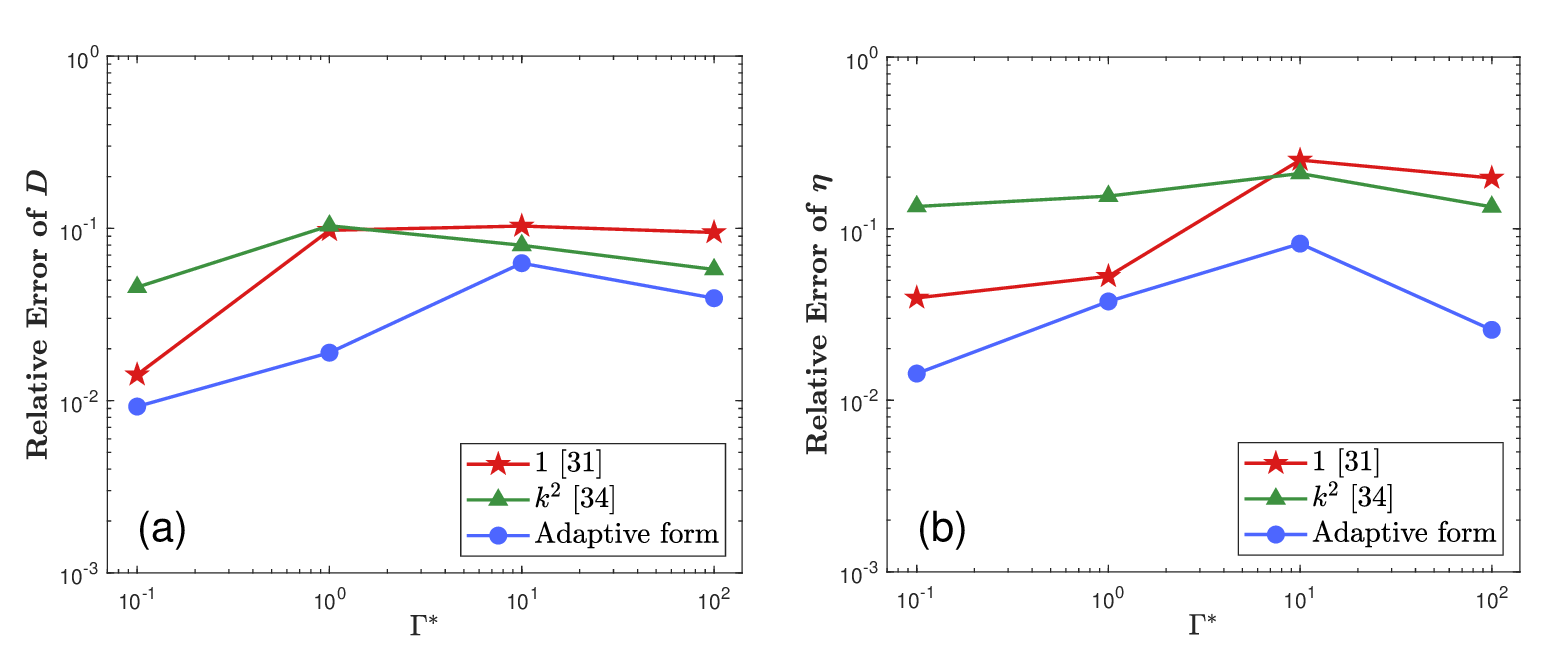}
\caption{The relative errors of (a) the self-diffusion coefficient $D$ and (b) the thermal conductivity $\eta$ across different effective coupling factors $\Gamma^*$ during a $10^5$ time step simulation of YOCP systems. The simulations use the RBSOG method with varying corrections $\mathscr{F}(\bm{k})$ on the sampling measure as defined in Eq.~\eqref{eq::3.124}, using a fixed batch size of $P=20$.}
\label{fig::SDCTC}
\end{figure}

To further evaluate the advantages of our adaptive importance sampling strategy compared to those in~\cite{Jin2020SISC,Liang2023SISC}, we calculate the self-diffusion coefficient and thermal conductivity, two key quantities characterizing accuracy on spatial dynamics and thermodynamics. The self-diffusion coefficient $D$ is calculated from the Einstein relation~\cite{Frenkel2001Understanding}:
\begin{equation}
D=\lim_{t\to\infty}\frac{1}{6t}\langle\vert\bm{r}_i(t)-\bm{r}_i(0)\vert^2\rangle,
\end{equation}
where $\bm{r}_i(t)$ represents the position of the $i$th particle at time $t$. The thermal conductivity $\eta$ is calculated using the Green-Kubo relation: 
\begin{equation}
\eta=\lim_{\tau\to\infty}\frac{1}{k_BT^2V}\int_0^{\tau}\langle\bm{J}(0)\cdot \bm{J}(t)\rangle dt,
\end{equation}
where $t$ is the time and the heat flux $\bm{J}(t)$ is defined as: 
\begin{equation}
\bm{J}(t)=\sum_{i=1}^N\left[\frac{1}{2}m_{i}\vert\bm{v}_i(t)\vert^2\bm{v}_i(t)+\frac{1}{2}\sum_{j\neq i}^N\phi(|\bm{r}_{ij}(t)|)\bm{v}_i(t)+\frac{1}{2}\sum_{j\neq i}^N\left(\bm{r}_{ij}(t)\cdot\bm{v}_i(t)\right)\bm{F}_{ij}(t)
 \right],
\end{equation}
with $\phi(\cdot)$ representing the potential function. The results in Figure~\ref{fig::SDCTC}(a-b) demonstrate that our adaptive importance sampling strategy achieves the best accuracy, outperforming the other two sampling strategies from~\cite{Jin2020SISC} and \cite{Liang2023SISC} across the entire range of the effective coupling factor $\Gamma^*$. This further highlights the robustness and effectiveness of our proposed method in accurately capturing dynamic properties for Yukawa systems.

\subsection{Application to the deuterium-\texorpdfstring{$\alpha$}~ mixture}\label{subsec::energyconservation}

Measuring the input energy from $\alpha$-heating is critical for achieving fusion ignition~\cite{betti2015alpha}. In this process, the energy from $\alpha$ particles produced by fusion is deposited in the fusion plasma (deuterium), transferring energy to it. Due to the high-temperature and high-density nature of the system, it is hard to simulate such system using MD, which requires a fairly small time step $\Delta t$, leading to substantial computational cost~\cite{graziani2012large}. Our proposed method provides a promising solution to this issue.

We consider a deuterium-$\alpha$ mixture with $45200$ deuterium particles and $2600$ $\alpha$ particles in a cubic cell with a side length of $10\mathring{A}$. The initial system, a deuterium plasma, is in thermal equilibrium in the NVT ensemble at a temperature of $T_{\text{D}}=3$ keV and a number density of $n_{\text{D}}=45.2~\mathring{A}^{-3}$. We then add the high-energy $\alpha$ particles at a temperature of $T_{\alpha}=3.54$ MeV and perform simulations in the NVE ensemble. 

In the simulation, the PPPM and the RBSOG are integrated with the sympletic velocity-Verlet method~\cite{Frenkel2001Understanding} and the weakly-coupled scheme~\cite{Liang2024Energy}, respectively, to maintain energy stability. 
Figure~\ref{fig:06}(a) shows the energy evolution for the RBSOG and the PPPM methods with different time steps $\Delta t$. The RBSOG with $\Delta t=10^{-4}$ $fs$ and the PPPM with $\Delta t=2\times10^{-6}$ $fs$ maintain energy stability for $100$ $fs$ simulations, while significant energy drift is observed for the PPPM with $\Delta t=10^{-4}$ and $10^{-5}$ $fs$. Occasional small energy fluctuations with the RBSOG are due to large-angle scattering of closely interacting particles, but our method corrects back to the reference value in a quick time. This suggests that the RBSOG can provide stable and accurate results with a large time step.

Physically, when a hot $\alpha$ particle interacts with a relatively cold deuterium particle, the hot $\alpha$ particle transfers a large amount of energy to the deuterium particle, causing the temperature of the $\alpha$ particle to drop. In Figure~\ref{fig:06}(b), we present the evolution of the temperature of $\alpha$ particles. Both the RBSOG with $\Delta t=10^{-4}$ $fs$ and the PPPM with $\Delta t=2\times 10^{-6}$ $fs$ explicitly capture this cooling process for a period of $10$ $fs$. However, due to the energy drift, this energy exchange process is not obvious for the PPPM with $\Delta t=10^{-5}$ $fs$. Even more, the PPPM with $\Delta t=10^{-4}$ $fs$ incorrectly shows a warming of $\alpha$ particles. This further demonstrates that the RBSOG with an appropriate energy bath can accurately capture the physical properties of high-temperature and high-density plasma systems using larger time steps, which is difficult to achieve with existing mainstream algorithms. 

\begin{figure}[!ht]
\centering	\includegraphics[width=0.87\linewidth]{./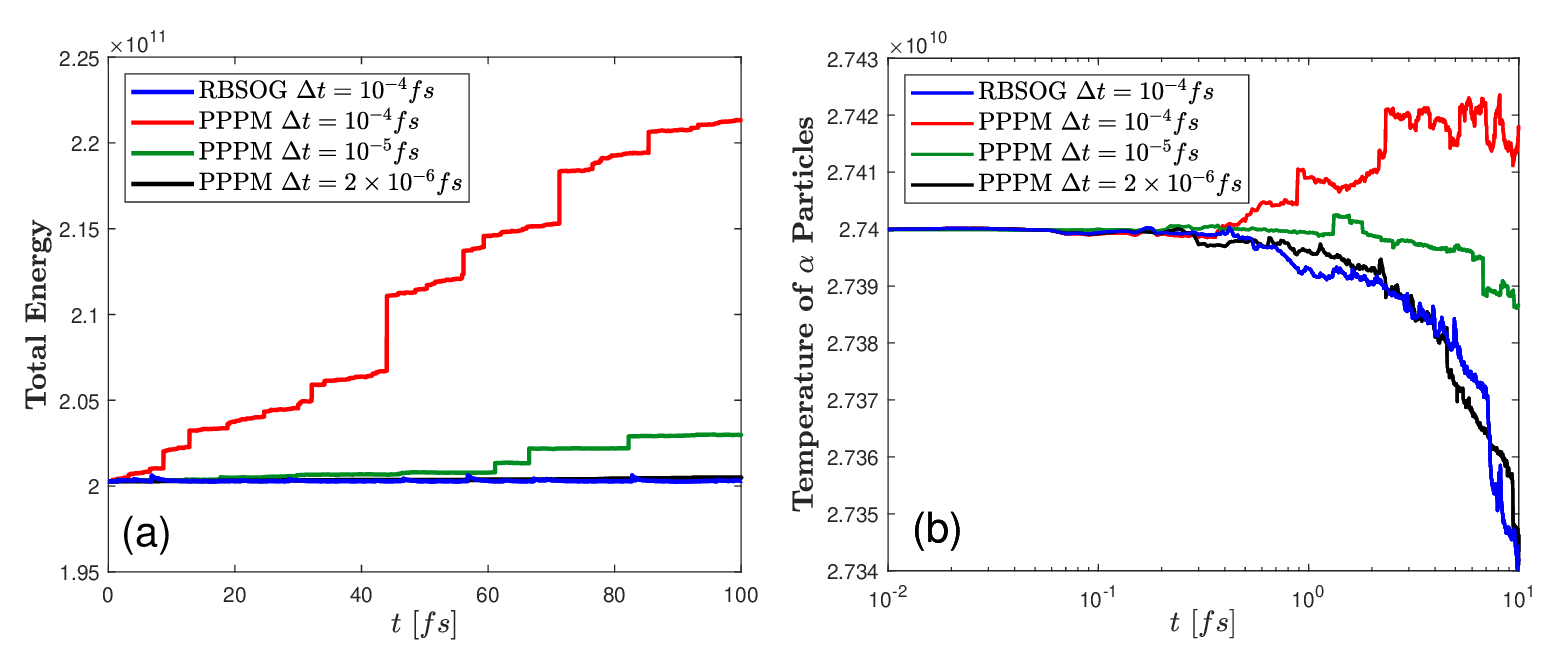}
\caption{Simulation results of high-temperature, high-density deuterium-$\alpha$ mixtures. (a): Evolution of total energy. (b): Evolution of temperature of $\alpha$ particles. Data are shown for the RBSOG and the PPPM with different time steps $\Delta t$.}
\label{fig:06}
\end{figure}

\section{Concluding remarks}\label{sec::conclu}
In summary, we have introduced a novel RBSOG method for efficiently simulating three-dimensional Yukawa systems. This approach is based on a new SOG decomposition of the Yukawa kernel, providing improved accuracy and regularity over the traditional Ewald decomposition. By employing the idea of random mini-batch in Fourier space with an adaptive importance sampling strategy, our algorithm achieves \(O(N)\) complexity, high parallel scalability, and near-optimal variance reduction across all coupling parameters. We provide rigorous analysis on the SOG decomposition construction, variance reduction, and simulation convergence. Numerical simulations of YOCP systems with both weak and strong coupling demonstrate the accuracy, efficiency, and scalability of our method. Compared to the PPPM and PVFMM methods, our approach accelerates simulations by an order of magnitude using \(1.28 \times 10^6\) charges and 1024 cores. Simulations in high-temperature, high-density deuterium-$\alpha$ mixtures highlight the potential of our method for applications in fusion ignition plasma systems. Furthermore, the RBSOG method can be easily extended to other interacting kernels in plasma simulations using kernel-independent SOG approximations~\cite{greengard2018anisotropic,gao2021kernelindependent}. Future work will focus on extending the method to quasi-2D Yukawa systems, addressing challenges such as confinement effects and dielectric mismatches~\cite{gan2024fast,gan2024random}. 

\appendix

\section{Proof of Lemma \ref{lemma::1.4}}\label{app::A.1}
Let $\mathscr{R}(\nu)$ denote the real part of $\nu$. Under the conditions $\mathscr{R}(\nu)>-1/2$, $x>0$ and $z>0$, the modified Bessel function $K_{\nu}(xz)$ has the Basset's integral representation \cite{abramowitz1964handbook}
\begin{equation}
	K_{\nu}(xz)=\frac{\Gamma(\nu+\frac{1}{2})(2z)^{\nu}}{\sqrt{\pi}x^{\nu}}\int_{0}^{\infty}\frac{\cos(xt)}{(t^2+z^2)^{\nu+\frac{1}{2}}}dt,
\end{equation}
where
\begin{equation}
	\Gamma(x):=\int_{0}^{\infty}t^{x-1}e^{-t}dt
\end{equation}
is the Gamma function. It follows that
\begin{equation}\label{eq::A.3}
	\begin{split}
	\left|K_{\frac{1}{2}-2\pi ik}\left(\frac{r}{\lambda}\right)\right|=\left|\Gamma(1-2\pi ik)\right|\sqrt{\frac{2\lambda}{\pi r}}\left|\int_{0}^{\infty}\frac{\cos\left(\frac{rt}{\lambda}\right)}{(1+t^2)^{1-2\pi ik}}dt\right|\leq \sqrt{\frac{\pi\lambda}{2r}}\left|\Gamma(1-2\pi i k)\right|.
	\end{split}
\end{equation}
The Gamma funtion on the right-hand side has a useful inequality \cite{cuyt2008handbook}
\begin{equation}\label{eq::A.4}
	\left|\Gamma(1-2\pi i k)\right|\leq (2\pi)^{\frac{1}{2}}e^{-\pi^2k}e^{\frac{1}{6|1-2\pi i k|}}.
\end{equation}
Combining Eqs.~\eqref{eq::A.3} and \eqref{eq::A.4}, we finish the proof.

\section{Analysis of the variance of force}\label{sec::energy}
Recall the sampling measure given by Eq.~\eqref{eq::3.124}, where the correction $\mathscr{F}(\bm{k})$ is taken in the adaptive form of Eq.~\eqref{eq3::3.22}. Let us study the variance of force in this case. By Eq.~\eqref{eq3::3.22}, it can be derived that
\begin{equation}
	\frac{k^2V}{\langle\left|\rho(\bm{k})\right|^2\rangle}=\frac{k_{\text{B}}Tk^2/Q+1/\epsilon(\bm{k})}{k_{\text{B}}T}\leq \frac{1}{k_{\text{B}}T}+k^2/Q:=D(k)
\end{equation}
where we use the fact that $\epsilon(\bm{k})^{-1}\leq 1$~\cite{hansen2013theory}. By this inequality and Eq.~\eqref{eq::3.29}, one has
\begin{equation}\label{eq::3.28}
	\begin{split}
		\mathbb{E}|\bm{\Xi}_{F,i}|^2
		&\leq \frac{1}{P}\dfrac{C Shq_{i}^2}{4\pi^2 V}\sum_{m=-M_1}^{M_2}e^{-t_m}\int_{0}^{\infty}4\pi k^2e^{-e^{-t_m}k^2/4}D(k)dk\\
		&=\frac{1}{P}\dfrac{2CShq_{i}^2}{\sqrt{\pi} V}\sum_{m=-M_1}^{M_2}\left(\frac{e^{t_m/2}}{k_{\text{B}}T}+\frac{6e^{3t_m/2}}{Q}\right)\\
		&\leq \frac{1}{P}\dfrac{2C_1Shq_{i}^2}{\sqrt{\pi} V},
	\end{split}
\end{equation}
where constant $C_1/C$ arises from the fact that the sum over $m$ in the third equation is bounded. By the definition of normalization factor $S$, one has the following estimate:
\begin{equation}
	\begin{split}
		S
  \leq \frac{hQV}{2\pi}\sum_{m=-M_1}^{M_2}e^{-t_m}\int_{0}^{\infty}k^2e^{-e^{-t_{m}}k^2/4}dk=\frac{hQV}{\sqrt{\pi}}\sum_{m=-M_1}^{M_2}e^{t_{m}/2}.
	\end{split}
\end{equation}
Since $\sum\limits_{m=-M_1}^{M_2}e^{t_{m}/2}$ is bounded by an $O(1)$ constant $C_2$, one has 
\begin{equation}\label{eq::3.30}
	S\leq C_2 \pi^{-1/2}h\rho \max_{i}|q_i|^2V= O(V).
\end{equation} 
Substituting Eq.~\eqref{eq::3.30} into Eq.~\eqref{eq::3.28} gives $\mathbb{E}|\bm{\Xi}_{F,i}|^2=O\left(P^{-1}\right)$ which is also independent of both the particle numbers and the number of Gaussians.

\section{Parameter sets for \texorpdfstring{$C^2$}~-continuous SOG decomposition}\label{app::C2continuous}

In Table~\ref{tab::1.2}, we provide parameter sets for $C^2$-continuous SOG decomposition, analogous to those presented in Table~\ref{tab::1.1} for $C^1$-continuous decomposition. This is done by write the far-field part of SOG decomposition as
\begin{equation}\label{eq::c2}
 	\mathcal{F}_{h}^{t_0}(r)=h\left[w_{M_2}\dfrac{1}{\sqrt{\pi}}e^{-r^2e^{t_{M_2}}s_{M_2}-\frac{1}{4\lambda^2e^{t_{M_2}}}+t_{M_2}/2}+\sum_{m=-M_1}^{M_2-1}f\left(t_m,r\right)\right],
\end{equation}
and conjointly solve the potential continuity condition Eq.~\eqref{eq::2.38}, the force continuity condition Eq.~\eqref{eq::1.25}, and the second derivative condition 
\begin{equation}
\frac{d^2}{dr^2}\left[Y(r)-\mathcal{F}_{h}^{t_0}(r)\right]{\bigg|}_{r=r_c}=0
\end{equation}
for $t_0$, $\omega_{M_2}$, and $s_{M_2}$.

\renewcommand\arraystretch{1.3}
\begin{table}[!ht]
\caption{Parameter sets for $C^{2}$-continuous SOG decomposition. $M_{\text{tot}}:=M_1+M_2+1$ is the minimal number of Gaussians satisfying the error criteria on the region $[r_c,R]$ with $r_c=5\lambda$, $R=33\lambda$, and $\lambda=0.5773$.}
\centering
\scalebox{0.88}{
\begin{tabular}{cccccccc}
\hline
$\epsilon$ &$h$ &$ t_0 $ &$M_1$ &$M_2$ & $M_{\text{tot}}$ &$\omega_{M_2}$  &$s_{M_2}$ \\ \hline
		$10^{-3}$    &1.131155934143089 &0.320762776093483  &3  &0 & 4 &0.989315895934884   &0.925544356078519 \\ \hline
		$10^{-4}$   &0.894984933518395 &0.503827089686760   &4   &0 & 5 &0.991286398776329   &0.916780172492483 \\ \hline
		$10^{-5}$ 	  &0.740391708745519 &0.199987634558998   &5   &1 & 7  &1.011505830491367 &0.998238561838385 \\  \hline
		$10^{-7}$  	&0.550285792019561 &0.105931467073842   &7   &1  & 9 &1.010000002048907   &1.053279452264880 \\  \hline
		$10^{-9}$ 	  &0.437859267899280 &0.159907820763955   &9   &1 & 11 &0.990000003971023   &0.856307821865310 \\  \hline
		$10^{-11}$    &0.363578174148321 &0.169998676580835   &12   &2 
&  15 &1.000000000000379   &1.003207791668459 \\ \hline
		$10^{-13}$ 	&0.310844614243983 &0.096999894851203   &13   &3 
 & 17  &1.000000000000002   &0.998956217641487 \\  
		\hline
	\end{tabular}}
\label{tab::1.2}
\end{table}

\section*{Acknowledgments}
This work was supported by the National Natural Science Foundation of China (Grants No. 12325113, 12401570 and 12426304) and the Science and Technology Commission of Shanghai Municipality (Grant No. 23JC1402300). The work of J. L. is partially supported by the China Postdoctoral Science Foundation (grant No. 2024M751948). The authors would like to thank the support from the Center for High Performance Computing at Shanghai Jiao Tong University and SJTU Kunpeng \& Ascend Center of Excellence.


\end{document}